\DeclareMathOperator{\bord}{border}
\DeclareMathOperator{\prefix}{prefix}
\DeclareMathOperator{\suffix}{suffix}
\DeclareMathOperator{\nr}{nr}
\DeclareMathOperator{\rank}{rank}
\newcommand{\twodots}{\mathinner{\ldotp\ldotp}}
\newcommand{\proc}[1]{\textnormal{\scshape#1}}
\begin{document}

\title{Pattern matching in Lempel-Ziv compressed strings: fast, simple, and deterministic\thanks{Supported by MNiSW grant number N~N206 492638, 2010--2012}}
\titlerunning{Pattern matching in Lempel-Ziv compressed strings}

\author{Pawe\l{} Gawrychowski}
\institute{Institute of Computer Science,\\
	University of Wroc{\l}aw,\\
	ul. Joliot-Curie 15, 50--383 Wroclaw,
	Poland \\
	\email{gawry@cs.uni.wroc.pl}
	}

\maketitle
\begin{abstract}
Countless variants of the Lempel-Ziv compression are widely used in many real-life applications. This paper is concerned with a natural modification of the classical pattern matching problem inspired by the popularity of such compression methods: given an uncompressed pattern $s[1\twodots m]$ and a Lempel-Ziv representation of a string $t[1\twodots N]$, does $s$ occur in $t$? Farach and Thorup~\cite{Farach} gave a randomized $\mathcal{O}(n\log^2\frac{N}{n}+m)$ time solution for this problem, where $n$ is the size of the compressed representation of $t$. Building on the methods of~\cite{CharikarApproximation} and~\cite{GawrychowskiLZW}, we improve their result by developing a faster and fully deterministic $\mathcal{O}(n\log\frac{N}{n}+m)$ time algorithm with the same space complexity. Note that for highly compressible texts, $\log\frac{N}{n}$ might be of order $n$, so for such inputs the improvement is very significant. A (tiny) fragment of our method can be used to give an asymptotically optimal solution for the substring hashing problem considered by Farach and Muthukrishnan~\cite{FarachHashing}.

\textbf{Key-words}: pattern matching, compression, Lempel-Ziv
\end{abstract}

\section{Introduction}

Effective compression methods allow us to decrease the space requirements which is clearly worth pursuing on its own. On the other hand, we do not want to store the data just for the sake of having it: we want to process it efficiently on demand. This suggest an interesting direction: can we process the data without actually decompressing it? Or, in other words, can we speed up processing if the compression ratio is high? Answer to such questions clearly depends on the particular compression and processing method chosen. In this paper we focus on Lempel-Ziv (also known as LZ77, or simply LZ for the sake of brevity), one of the most commonly used compression methods being the basis of the widely popular \texttt{zip} and \texttt{gz} archive file formats, and on pattern matching, one of the most natural text processing problem we might encounter. More specifically, we deal with the compressed pattern matching problem: given an uncompressed pattern $s[1\twodots m]$ and a LZ representation of a string $t[1\twodots N]$, does $s$ occur in $t$? This line of research has been addressed before quite a few times already. Amir, Benson, and Farach~\cite{Amir} considered the problem with LZ replaced by Lempel-Ziv-Welch (a simpler and easier to implement specialization of LZ), giving two solutions with complexities $\mathcal{O}(n\log m+m)$ and $\mathcal{O}(n+m^2)$, where $n$ is the size of the compressed representation. The latter has been soon improved~\cite{Kosaraju} to $\mathcal{O}(n+m^{1+\epsilon})$. Then Farach and Thorup~\cite{Farach} considered the problem in its full generality and gave a (randomized) $\mathcal{O}(n\log^2\frac{N}{n}+m)$ time algorithm for the LZ case. Their solution consists of two phases, called {\it winding} and {\it unwinding}, the first one uses a cleverly chosen potential function, and the second one adds fingerprinting in the spirit of string hashing of Karp and Rabin~\cite{KarpRabin}. While a recent result of~\cite{Iacono} shows that the winding can be performed in just $\mathcal{O}(n\log\frac{N}{n})$, it is not clear how to use it to improve the whole running time (or remove randomization).  In this paper we take a completely different approach, and manage to develop a $\mathcal{O}(n\log\frac{N}{n}+m)$ time algorithm. This complements our recent result from SODA'11~\cite{GawrychowskiLZW} showing that in case of Lempel-Ziv-Welch, the compressed pattern matching can be solved in optimal linear time. The space usage of the improved algorithm is the same as in the solution of Farach and Thorup, $\mathcal{O}(n\log\frac{N}{n}+m)$.

Besides the algorithm of Farach and Throup, the only other result that can be applied to the LZ case we are aware of is the work of Kida \emph{et al.}~\cite{Kida}. They considered the so-called \emph{collage systems} allowing to capture many existing compression schemes, and developed an efficient pattern matching algorithm for them. While it does not apply directly to the LZ compression, we can transform a LZ parse into a non-truncating collage system with a slight increase in the size, see section~\ref{section:constructing}. The running time (and space usage) of the resulting algorithm is $\mathcal{O}(n\log\frac{N}{n}+m^{2})$. While $m^{2}$ might be acceptable from a practical point of view, removing the quadratic dependency on the pattern length seems to be a nontrivial and fascinating challenge from a more theoretical angle, especially given that for some highly compressible texts $n$ might be much smaller than $m$. Citing~\cite{Kida}, even decreasing the dependency to $m^{1.5}\log m$ (the best preprocessing complexity known for the LZW case~\cite{Kosaraju} at the time) ``is a challenging problem''.

While we were not able to achieve linear time for the general LZ case, the algorithm developed in this paper not only significantly improves the previously known time bounds, but also is fully deterministic and (relatively) simple. Moreover, LZ compression allows for an exponential decrease in the size of the compressed text, while in LZW $n$ is at least $\sqrt{N}$. In order to deal with such highly compressible texts efficiently we need to combine quite a few different ideas, and the nonlinear time of our (and the previously known) solution might be viewed as an evidence that LZ is substantially more difficult to deal with than LZW. While most of those ideas are simple, they are very carefully chosen and composed in order to guarantee the $\mathcal{O}(n\log\frac{N}{n}+m)$ running time. We believe the simplicity of those basic building blocks should not be viewed as a drawback. On the contrary, it seems to us that improving a previously known result (which used fairly complicated techniques) by a careful combination of simple tools should be seen as an advantage. We also argue that in a certain sense, our result is the best possible: if integer division is not allowed, our algorithm can be implemented in $\mathcal{O}(n\log N+m)$ time, and this is the best time possible.

\section{Overview of the algorithm}

Our goal is to detect an occurrence of $s$ in a given Lempel-Ziv compressed text $t[1\twodots N]$. The Lempel-Ziv representation is quite difficult to work with efficiently, even for a such simple task as extracting a single letter. The starting point of our algorithm is thus transforming the input into a {\it straight-line program}, which is a context-free grammar with each nonterminal generating exactly one string. For that we use the method of Charikar {\it et al.}~\cite{CharikarApproximation} to construct a SLP of size $\mathcal{O}(n\log\frac{N}{n})$ with additional property that all productions are {\it balanced}, meaning that the right sides are of the form $XY$ with $\frac{\alpha}{1-\alpha}\leq\frac{|X|}{|Y|}\leq\frac{1-\alpha}{\alpha}$ for some constant $\alpha$, where $|X|$ is the length of the (unique) string generated by $X$. Note that Rytter gave a much simpler algorithm~\cite{RytterApproximation} with the same size guarantee, using the so-called AVL grammars but we need the grammar to be balanced. We also need to add a small modification to allow self-referential LZ.

After transforming the text into a balanced SLP, for each nonterminal we try to check if the string it represents occurs inside $s$, and if so, compute the position of (any) its occurrence. Otherwise we would like to compute the longest prefix (suffix) of this string which is a suffix (prefix) of $s$. At first glance this might seem like a different problem that the one promised to solve: instead of locating an occurrence of the pattern in the text, we retrieve the positions of fragments of the text in the pattern. Nevertheless, solving it efficiently gives us enough information to answer the original question due to a constant time procedure which detects an occurrence of $s$ in a concatenation of two its substrings. 

The first (simple) algorithm for processing a balanced SLP we develop requires as much as $\mathcal{O}(\log m)$ time per query, which results in $\mathcal{O}(n\log\frac{N}{n}\log m+m)$ total complexity. This is clearly not enough to beat~\cite{Farach} on all possible inputs. Hence instead of performing the computation for each nonterminal separately, we try to process them in $\mathcal{O}(\log N)$ groups corresponding to the (truncated) logarithm of their length. Using the fact that the grammar is balanced, we are then able to achieve $\mathcal{O}(n\log\frac{N}{n}+m\log m)$ time.  Because of some technical difficulties, in order to decrease this complexity we cannot really afford to check if the represented string occurs in $s$ for each nonterminal exactly, though. Nevertheless, we can compute some approximation of this information, and by using a tailored variant of binary search applied to all nonterminals in a single group at once, we manage to process the whole grammar in time proportional to its size while adding just $\mathcal{O}(m)$ to the running time. 


\section{Preliminaries}

The computational model we are going to use is the standard RAM allowing direct and indirect addressing, addition, subtraction, integer division and conditional jump with word size $w\geq\max\{\log n,\log N\}$. One usually allows multiplication as well in this model but we do not need it, and the only place where we use integer division (which in some cases is known to significantly increase the computational power), is the proof of Lemma~\ref{lemma:balanced construction}.

We do not assume that any other operation (like, for example, taking logarithms) can be performed in constant time on arbitrary words of size $w$. Nevertheless, because of the $n$ addend in the final running time, we can afford to preprocess the results on words of size $\log n$ and hence assume that some additional (reasonable) operations can be performed in constant time on such inputs.

As usually, $|w|$ stands for the length of $w$, $w[i\twodots j]$ refers to its fragment of length $j-i+1$ beginning at the $i$-th character, where characters are numbered starting from $1$. All strings are over an alphabet $\Sigma$ of polynomial cardinality, namely $\Sigma=\{1,2,\ldots,(n+m)^c\}$.  A border of $w[1\twodots |w|]$ is a fragment which is both a prefix and a suffix of $w$, i.e., $w[1\twodots i]=w[|w|-i+1\twodots |w|]$. We identify such fragment with its length and say that $\bord(t)=\{i_1,\ldots,i_k \}$ is the set of all borders of $t$. A period of a string $w[1\twodots |w|]$ is an integer $p$ such that $w[i]=w[i+p]$ for all $1\leq i\leq |w|-p$. Note that $p$ is a period of iff $|w|-p$ is a border. The following lemma is a well-known property of periods.

\begin{lemma}[Periodicity lemma]\label{lemma:periodicity}
If $p$ and $q$ are both periods of $w$, and $p+q\leq |w|+\gcd(p,q)$, then $\gcd(p,q)$ is a period as well.
\end{lemma}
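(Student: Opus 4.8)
The plan is to prove the statement by strong induction on $p+q$, mimicking the subtractive Euclidean algorithm. Write $d=\gcd(p,q)$ and assume without loss of generality that $p\ge q$ (the statement is symmetric in $p$ and $q$). If $p=q$ then $d=p=q$ is itself a period and there is nothing to prove, so suppose $p>q$. Since $\gcd(p-q,q)=\gcd(p,q)=d$ and $(p-q)+q=p\le |w|+d-q<|w|+d$, the pair $(p-q,q)$ again satisfies the hypothesis of the lemma and has strictly smaller sum. Hence, once I show that $p-q$ is also a period of $w$, the inductive hypothesis applied to the periods $p-q$ and $q$ yields that $\gcd(p-q,q)=d$ is a period, which is exactly what we want; the base case $p=q=d$ is trivial.

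The core of the argument is therefore the claim that $p-q$ is a period, i.e.\ that $w[i]=w[i+(p-q)]$ for every $1\le i\le |w|-(p-q)$. I would establish this by chaining the two given periods. For an index $i>q$ one descends by $q$ and then ascends by $p$: $w[i]=w[i-q]=w[i-q+p]=w[i+(p-q)]$, where the first equality uses that $q$ is a period and the second uses that $p$ is a period, both applications being legal precisely because $i\le |w|-(p-q)$. For an index $i\le q$ with $i\le|w|-p$ one instead ascends by $p$ first: $w[i]=w[i+p]=w[i+(p-q)]$, the last equality using that $q$ is a period at position $i+p-q$. Together these two chains cover every index whenever $p+q\le|w|$, so in that regime the reduction, and with it the whole induction, goes through immediately.

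The main obstacle is the remaining window $|w|-p<i\le q$, which is nonempty exactly when $p+q>|w|$ (while still $p+q\le|w|+d$). For such $i$ both one-step chains above leave the range $[1,|w|]$, so neither short-cut applies, and this is precisely the situation that the extra $\gcd(p,q)$ slack in the hypothesis is designed to handle. I expect to resolve it by replacing the single-step chains with longer walks that alternately add $p$ and subtract $q$: the condition $p+q\le|w|+d$ guarantees enough room for such a walk to turn around without ever stepping outside $[1,|w|]$, while the Euclidean structure of $d=\gcd(p,q)$ ensures the walk can realise a net displacement of $p-q$ through valid positions only. Equivalently, I could factor out $d$, reducing to the coprime case $\gcd(p,q)=1$, and then argue that the graph on positions $\{1,\dots,|w|\}$ whose edges join indices differing by $p$ or by $q$ is connected, so that all positions carry the same letter. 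Verifying that the $\gcd$ slack is exactly what keeps these chains (equivalently, this graph) in range is the delicate point; once it is checked, the induction closes and the proof is complete.
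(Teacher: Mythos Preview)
The paper does not actually prove this lemma; it is stated as ``a well-known property of periods'' (this is the Fine--Wilf theorem) and used without justification, so there is no argument in the paper to compare yours against.

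Your plan is one of the standard proofs and is sound. The Euclidean descent together with the two short chains you give for $i>q$ and for $i\le|w|-p$ is correct and already settles the regime $p+q\le|w|$. For the residual window $|w|-p<i\le q$ you rightly observe that both one-step chains leave $[1,|w|]$; of the two fixes you sketch, the coprime reduction is the cleaner one and works exactly as you suggest. After factoring out $d$, each residue-class substring $w[r]\,w[r{+}d]\,w[r{+}2d]\cdots$ has length at least $\lfloor|w|/d\rfloor\ge p/d+q/d-1$, so one is reduced to $\gcd(p,q)=1$ with $|w|\ge p+q-1$. In that case the graph on $\{1,\dots,|w|\}$ with edges joining indices at distance $p$ or $q$ is indeed connected: every vertex reaches $\{1,\dots,p\}$ by repeatedly subtracting $p$, and on $\{1,\dots,p\}$ the permutation $j\mapsto j+q\pmod p$ is a single $p$-cycle all of whose edges are realised inside the graph except possibly the one leaving $j=p$ when $|w|=p+q-1$ exactly, which still leaves a connected path. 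So nothing in your approach fails; what remains is only to write out these two verifications rather than leave them as expectations.
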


The Lempel-Ziv representation of a string $t[1\twodots N]$ is a sequence of triples $(start_i,len_i,next_i)$ for $i=1,2,\ldots,n$, where $n$ is the size of the representation. $start_i$ and $len_i$ are nonnegative integers, and $next_i\in\Sigma$. Such triple refers to a fragment of the text $t[start_i\twodots start_i+len_i-1]$ and defines $t[1+\sum_{j<i}len_j\twodots\sum_{j\leq i}len_j]=t[start_i\twodots start_i+len_i-1]next_i$. We require that $start_i\leq\sum_{j<i}len_j$ if $len_i>0$. The representation is not self-referential if all fragments we are referring to are already defined, i.e., $start_i+len_i-1\leq\sum_{j<i}len_j$ for all $i$. The sequence of triples is often called the {\it LZ parse} of text.

{\it Straight-line program} is a context-free grammar in the Chomsky normal form such that the nonterminals $X_1,X_2,\ldots,X_s$ can be ordered in such a way that each $X_i$ occurs exactly once as a left side, and whenever $X_i\rightarrow X_j X_k$ it holds that $j,k<i$. We identify each nonterminal with the unique string it derives, so $|X|$ stands for the length of the string derived from $X$. We call a straight-line program (SLP) {\it balanced} if for each production $X\rightarrow YZ$ both $|Y|$ and $|Z|$ are bounded by a constant fraction of $|X|$.

We preprocess the pattern $s$ using standard tools (suffix trees~\cite{Ukkonen} built for $s$ and reversed $s$, and LCA queries~\cite{BenderLCA}) to get the following primitives.

\begin{lemma}\label{lemma:equality}
Pattern $s$ can be preprocessed in linear time so that given $i,j,k$ representing any two fragments $s[i\twodots i+k]$ and $s[j\twodots j+k]$ we can find their longest common prefix (suffix) in constant time.
\end{lemma}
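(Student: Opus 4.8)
The plan is to reduce the query to a lowest-common-ancestor computation in a suffix tree. First I would build, in linear time via Ukkonen's algorithm~\cite{Ukkonen}, the suffix tree $T$ of $s$, and alongside it record for every node its \emph{string depth} (the length of the concatenation of edge labels from the root) and, for every starting position $p$, a pointer to the leaf corresponding to the suffix $s[p\twodots|s|]$. On $T$ I would then run the linear-time LCA preprocessing of Bender and Farach-Colton~\cite{BenderLCA}, so that the lowest common ancestor of any two nodes is available in constant time.

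The key observation is the classical identity that the length of the longest common prefix of two suffixes $s[i\twodots|s|]$ and $s[j\twodots|s|]$ equals exactly the string depth of the lowest common ancestor of their two leaves in $T$. Hence, given a query $(i,j,k)$, I would locate the leaves for positions $i$ and $j$, compute their LCA $v$ in constant time, read off its string depth $d$, and return $\min(d,\,k+1)$. The cap $\min(\cdot,k+1)$ is needed because the two fragments $s[i\twodots i+k]$ and $s[j\twodots j+k]$ each have length $k+1$, so their longest common prefix cannot exceed $k+1$ even when the underlying suffixes agree for longer; this truncation costs only a single comparison.

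For the longest common \emph{suffix} of the two fragments I would repeat the construction on the reversed pattern $s^{R}$: a longest common suffix ending at positions $i+k$ and $j+k$ in $s$ corresponds to a longest common prefix of the suffixes of $s^{R}$ starting at the mirrored positions $|s|-(i+k)+1$ and $|s|-(j+k)+1$, again capped at $k+1$. Thus building the suffix tree of $s^{R}$ with the same LCA preprocessing, and translating the query indices by the mirroring map, yields the longest common suffix in constant time. All preprocessing is linear and every query touches a fixed number of constant-time primitives, giving the stated bounds.

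I do not expect a genuine obstacle here, as both building blocks are standard; the only points requiring care are the correctness of the LCP-equals-string-depth identity (which I would state and use rather than reprove) and the precise index arithmetic for the mirroring and the length-$(k+1)$ truncation, so that the returned value is the common prefix/suffix of the \emph{fragments} and not of the full suffixes.
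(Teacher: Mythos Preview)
Your proposal is correct and matches the paper's approach exactly: the paper does not spell out a proof for this lemma but simply cites suffix trees~\cite{Ukkonen} for $s$ and its reverse together with constant-time LCA queries~\cite{BenderLCA}, which is precisely the construction you describe.
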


\begin{restatable}{lemma}{lemmalongestsuffix}
\label{lemma:longest suffix}
Pattern $s$ can be preprocessed in linear time so that given any fragment $s[i\twodots j]$ we can find its longest suffix (prefix) which is a prefix (suffix) of the whole pattern in constant time, assuming we know the (explicit or implicit) vertex corresponding to $s[i\twodots j]$ in the suffix tree built for $s$ (reversed $s$).
\end{restatable}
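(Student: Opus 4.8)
I will prove the first variant (longest suffix of $s[i\twodots j]$ that is a prefix of $s$, using the suffix tree of $s$); the other is symmetric under reversing $s$. First I record what is being asked: a suffix $s[k\twodots j]$ of $s[i\twodots j]$ is a prefix of $s$ exactly when $s[k\twodots j]=s[1\twodots j-k+1]$, i.e.\ when position $1$ lies in its set of occurrences. Equivalently these are exactly the borders $\bord(s[1\twodots j])$ of length at most $j-i+1$, and we want the longest one. The plan is to locate it as a \emph{nearest marked ancestor} in the suffix-link tree of $s$, after a linear-time preprocessing.

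The key structural observation is that the suffixes of $s[i\twodots j]$ are precisely the strings obtained from the vertex of $s[i\twodots j]$ by iterating suffix links, and every one of them has length at most $j-i+1$, so the length bound is enforced \emph{automatically} and no separate threshold has to be maintained. Let $\ell$ be the leaf for the whole suffix $s[1\twodots m]$, and call a vertex a \emph{spine} vertex if it is an ancestor of $\ell$; these are exactly the loci of prefixes of $s$, they depend only on $s$, and I mark them in one traversal. Because branching is preserved by suffix links (if $a\beta$ has two distinct right extensions then so does $\beta$), the suffix-link chain of an \emph{explicit} vertex consists entirely of explicit vertices, hence is an honest root path in the suffix-link tree. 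For explicit input the answer is therefore the string depth of the nearest spine ancestor of the given vertex, which I precompute for every vertex in a single pass from the root of the suffix-link tree (each vertex copies its parent's value unless it is itself marked): $\mathcal{O}(m)$ preprocessing and $\mathcal{O}(1)$ per query.

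The main obstacle is an \emph{implicit} input vertex, which is not a node of the suffix-link tree. Using the preservation property, the branching suffixes of $s[i\twodots j]$ are exactly the shorter ones $s[k\twodots j]$ with $k\ge k^\ast$ for a threshold $k^\ast$, and $s[k^\ast\twodots j]$ is an explicit vertex; the best prefix-suffix among all these shorter suffixes is precisely its precomputed value. The longer suffixes $s[i\twodots j],\ldots,s[k^\ast-1\twodots j]$ are non-branching, and among them finding the longest that is a prefix of $s$ amounts to finding the smallest $k<k^\ast$ with $s[k\twodots j]=s[1\twodots j-k+1]$, which I test with longest-common-extension queries from Lemma~\ref{lemma:equality}; I then return the longer of the two answers. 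The delicate point — and where I expect to spend the most effort — is guaranteeing that only an $\mathcal{O}(1)$ number of such extension tests (and an $\mathcal{O}(1)$ location of the split point $s[k^\ast\twodots j]$) are ever needed, rather than a number proportional to $k^\ast-i$. I would argue this from periodicity: two distinct long non-branching prefix-suffixes would force a short period of $s[1\twodots j]$ by Lemma~\ref{lemma:periodicity}, which in turn makes the common prefix recur and hence \emph{branch}, so that almost all of them already fall into the explicit tail handled in constant time by the precomputed value. Turning this periodicity argument into a clean constant bound, so that the whole query remains $\mathcal{O}(1)$ even for implicit vertices, is the crux of the proof.
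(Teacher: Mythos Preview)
You have misread the pairing of the two variants with the two trees. The paper's own proof (and its later uses, e.g.\ in Lemma~\ref{lemma:prefix from covers}) makes clear that the suffix tree of $s$ is used for the query ``longest \emph{prefix} of $s[i\twodots j]$ that is a \emph{suffix} of $s$'' (and, symmetrically, the suffix tree of $s^{r}$ for the other variant). With that pairing the proof is almost immediate: the prefixes of $s[i\twodots j]$ are exactly the ancestors of the given locus in the ordinary suffix tree, and a string is a suffix of $s$ iff its locus is the parent of a leaf in the suffix tree of $s\$$. Marking those parents and precomputing, for every explicit vertex, the lowest marked ancestor gives an $\mathcal{O}(1)$ query; an implicit locus on an edge $(u,v)$ simply inherits the value stored at its explicit parent $u$, since marked vertices are explicit. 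No suffix links, no periodicity, no case analysis on branching.

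Your route, by contrast, tries to chase \emph{suffixes} of $s[i\twodots j]$ via the suffix-link tree while staying in the suffix tree of $s$. Even if one accepts this as an alternative formulation, the argument has real gaps. First, you do not explain how to locate the split point $k^{\ast}$ (the first explicit locus along the suffix-link chain of an implicit vertex) in $\mathcal{O}(1)$; there is no standard $\mathcal{O}(m)$ preprocessing that gives this, and the number of implicit steps before the chain becomes explicit is not bounded by a constant. Second, the periodicity sketch for the ``long non-branching'' suffixes is not a proof: you would need to show that at most $\mathcal{O}(1)$ of the implicit suffixes can simultaneously be prefixes of $s$, and the claimed contradiction (``the common prefix recur[s] and hence branch[es]'') does not follow, since a short period of $s[1\twodots j]$ does not by itself force right-branching at the relevant depths. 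In short, the intended lemma is a one-line lowest-marked-ancestor computation in the suffix tree; switching to the suffix-link tree is unnecessary and leaves the implicit case unresolved.
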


\begin{proof}
We assume that the suffix tree is built for $s$ concatenated with a special terminating character, say $\$$. Each leaf in the suffix tree corresponds to some suffix of $s$, and is connected to its parent with an edge labeled with a single letter. If we mark all those parents, finding the longest prefix which is a suffix of the whole $s$ reduces to finding the lowest marked vertex on a given path leading the root, which can be precomputed for all vertices in linear time.
\qed
\end{proof}

We will also use the suffix array $SA$ built for $s$~\cite{Karkainnen}. For each suffix of $s$ we store its position inside $SA$, and treat the array as a sequence of strings rather than a permutation of $\{1,2,\ldots,|s|\}$. Given any word $w$, we will say that it occurs at position $i$ in the $SA$ if $w$ begins $s[SA[i]\twodots |s|]$. Similarly, the fragment of $SA$ corresponding to $w$ is the (maximal) range of entries at which $w$ occurs.

\section{Snippets toolbox}

In this section we develop a few efficient procedures operating on fragments of the pattern, which we call {\it snippets}:

\begin{definition}
A snippet is a substring of the pattern $s[i\twodots j]$. If $i=1$ we call it a prefix snippet, if $j=m$ a suffix snippet.
\end{definition}

We identify snippets with the substrings they represent, and use $|s|$ to denote the length of the string represented by $s$. A snippet is stored as a pair $(i,j)$.

The two results of this section that we are going to use later build heavily on the contents of~\cite{GawrychowskiLZW}. Specifically, Lemma~\ref{lemma:concatenation occurrence} appears there as Lemma 5. To prove it, we first need the following simple and relatively well known property of borders.

\begin{lemma}\label{lemma:few borders}
If the longest border of $t$ is of length $b\geq\frac{|t|}{2}$ then all borders of length at least $\frac{|t|}{2}$ create one arithmetic progression. More specifically, $\bord(t)\cap\left\{\frac{|t|}{2},\ldots,|t|\right\}=\left\{|t|-\alpha p: 0\leq\alpha\leq\frac{|t|}{2p} \right\}$, where $p=|t|-b$ is the period of $t$. We call this set the long borders of $t$.
\end{lemma}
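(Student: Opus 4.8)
The plan is to exploit the order-reversing correspondence between borders and periods recorded in the preliminaries: a border of length $c$ is the same thing as a period $|t|-c$, so the longest border $b$ corresponds to the \emph{smallest} period $p=|t|-b$, and the hypothesis $b\geq\frac{|t|}{2}$ translates into $p\leq\frac{|t|}{2}$. Working with this dictionary, I would reduce the claim to a statement about periods: the periods of $t$ that are at most $\frac{|t|}{2}$ are exactly the multiples of $p$ in that range. Proving the two inclusions of the displayed equality then amounts to the two inclusions of this periods statement.

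For the inclusion $\supseteq$, I would first show that every multiple $\alpha p$ with $\alpha p\leq|t|$ is again a period. This is a routine induction on $\alpha$: from $t[i]=t[i+p]$ valid for all $1\leq i\leq|t|-p$, chaining the equalities gives $t[i]=t[i+\alpha p]$ for all $1\leq i\leq|t|-\alpha p$, which is precisely the assertion that $\alpha p$ is a period. Translating back, $|t|-\alpha p$ is a border for every such $\alpha$, and imposing $|t|-\alpha p\geq\frac{|t|}{2}$ gives exactly the range $0\leq\alpha\leq\frac{|t|}{2p}$. Hence every element of the claimed set is indeed a long border.

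For the reverse inclusion $\subseteq$, which I expect to be the heart of the argument, I would take an arbitrary border $c$ of $t$ with $c\geq\frac{|t|}{2}$ and set $q=|t|-c$, a period with $q\leq\frac{|t|}{2}$. The goal is to show $p\mid q$. Since both $p$ and $q$ are at most $\frac{|t|}{2}$, their sum satisfies $p+q\leq|t|\leq|t|+\gcd(p,q)$, so the periodicity lemma (Lemma~\ref{lemma:periodicity}) applies and yields that $\gcd(p,q)$ is itself a period of $t$. The key step is to observe that $p$, being $|t|-b$ for the \emph{longest} border $b$, is the minimal period, so every period is at least $p$; in particular $\gcd(p,q)\geq p$. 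Combined with $\gcd(p,q)\mid p$ (whence $\gcd(p,q)\leq p$) this forces $\gcd(p,q)=p$, i.e. $p\mid q$. Writing $q=\alpha p$ gives $c=|t|-\alpha p$, and $c\geq\frac{|t|}{2}$ again pins down $\alpha\leq\frac{|t|}{2p}$, placing $c$ in the claimed set.

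The main obstacle is precisely the verification that the hypothesis of the periodicity lemma is met and its correct combination with the minimality of $p$; once that is in place, both inclusions close and the border set of length at least $\frac{|t|}{2}$ is exactly the arithmetic progression $\left\{|t|-\alpha p:0\leq\alpha\leq\frac{|t|}{2p}\right\}$, as required.
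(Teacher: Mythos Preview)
Your proof is correct. The paper does not actually supply a proof of this lemma at all; it simply calls it a ``simple and relatively well known property of borders'' and moves on. Your argument via the border--period duality and the periodicity lemma (Lemma~\ref{lemma:periodicity}) is the standard way to establish this fact, and both inclusions are handled cleanly: the $\supseteq$ direction by iterating the period, and the $\subseteq$ direction by applying Fine--Wilf to conclude $\gcd(p,q)$ is a period and then invoking minimality of $p$ to force $p\mid q$.
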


By applying the preprocessing from the Knuth-Morris-Pratt algorithm to $s$ and $s^r$ we can extract borders of prefix and suffix snippets efficiently.

\begin{lemma}\label{lemma:borders preprocessing}
Pattern $s$ can be preprocessed in linear time so that we can find the longest border of each its prefix (suffix) in constant time.
\end{lemma}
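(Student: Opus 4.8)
The plan is to recognize this as essentially the content of the Knuth-Morris-Pratt failure function, exactly as the sentence preceding the statement suggests. Recall that KMP preprocessing computes, for every $i$, the value $\pi[i]$ defined as the length of the longest \emph{proper} prefix of $s[1\twodots i]$ that is also a suffix of $s[1\twodots i]$. By the definition of a border, $\pi[i]$ is precisely the longest border of the prefix $s[1\twodots i]$, i.e.\ $\pi[i]=\max\bord(s[1\twodots i])$. Since the standard KMP preprocessing runs in $\mathcal{O}(m)$ time and fills the whole array $\pi[1\twodots m]$, after this preprocessing a query asking for the longest border of an arbitrary prefix snippet $s[1\twodots i]$ is answered in constant time by returning $\pi[i]$.

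For suffix snippets I would reduce to the prefix case by reversal. The key observation is that borders commute with reversal: a string $w$ has a border of length $b$ if and only if $w^r$ has a border of length $b$, since $w[1\twodots b]=w[|w|-b+1\twodots |w|]$ reverses to the corresponding equality for $w^r$. Hence the longest border of the suffix $s[i\twodots m]$ equals the longest border of its reversal $(s[i\twodots m])^r$, which is exactly the prefix $s^r[1\twodots m-i+1]$ of the reversed pattern. Running the same KMP preprocessing on $s^r$ yields a failure function $\pi^r$, and a query for the longest border of the suffix snippet $s[i\twodots m]$ is answered in constant time as $\pi^r[m-i+1]$.

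Putting the two together, the preprocessing consists of running the KMP failure-function computation once on $s$ and once on $s^r$, which is $\mathcal{O}(m)$ time (and $\mathcal{O}(m)$ space) in total, after which each query is a single array lookup. I do not expect a genuine obstacle here; the only point needing a line of justification is the reversal correspondence for suffixes together with the index bookkeeping ($s^r[k]=s[m-k+1]$ and a suffix of length $m-i+1$ mapping to the prefix of $s^r$ of the same length), and the observation that the failure function returns the longest \emph{proper} border, which is what the statement asks for.
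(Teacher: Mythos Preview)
Your proposal is correct and matches the paper's approach exactly: the paper states this lemma without proof, merely remarking beforehand that it follows from applying the KMP preprocessing to $s$ and to $s^r$. Your write-up spells out precisely this, including the reversal correspondence for suffixes.
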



The first result tells how to detect an occurrence in a concatenation of two snippets. We will perform a lot of such operations.

\begin{restatable}[Lemma 5 of~\cite{GawrychowskiLZW}]{lemma}{lemmaconcatenationoccurrence}
\label{lemma:concatenation occurrence}
Given a prefix snippet and a suffix snippet we can detect an occurrence of the pattern in their concatenation in constant time.
\end{restatable}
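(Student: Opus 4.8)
The plan is to reduce the question to a statement purely about borders, and then exploit periodicity to make the detection constant-time. Write the prefix snippet as $P=s[1\twodots a]$ and the suffix snippet as $S=s[b\twodots m]$, so $|S|=m-b+1$, and let $w=PS$. Any occurrence of $s$ in $w$ is either contained in $P$ (forcing $a=m$), contained in $S$ (forcing $b=1$), or crosses the boundary between $P$ and $S$; the first two cases are finished by a single length/endpoint test. For a crossing occurrence, let $\ell$ be the length of the part lying in $P$. The left part equals $s[a-\ell+1\twodots a]=s[1\twodots\ell]$, which says exactly that $\ell\in\bord(P)$, and the right part equals a prefix of $S$ that must coincide with the suffix $s[\ell+1\twodots m]$, which says exactly that $m-\ell\in\bord(S)$; the feasible range is $m-|S|\le\ell\le a$. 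Reparametrising by periods $\pi_P=a-\ell$ and $\pi_S=|S|-(m-\ell)$, this becomes: there is a crossing occurrence iff $P$ has a period $\pi_P$ and $S$ a period $\pi_S$ with $\pi_P+\pi_S=\Delta:=|P|+|S|-m$ (and $\Delta\ge 0$). Thus I first establish this reduction and note that longest borders of prefix/suffix snippets are available in $\mathcal{O}(1)$ by Lemma~\ref{lemma:borders preprocessing}.

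Next I would prove the structural heart: if two distinct split lengths $\ell_1<\ell_2$ are both valid, then $s$ has period $d=\ell_2-\ell_1$. Indeed, $\ell_1,\ell_2\in\bord(P)$ makes $s[1\twodots\ell_2]$ have period $d$, while $m-\ell_1,m-\ell_2\in\bord(S)$ makes the suffix $s[\ell_1+1\twodots m]$ have the same period $d$; these two fragments cover all of $s$ and overlap on $s[\ell_1+1\twodots\ell_2]$, a block of length exactly $d$, so gluing them with Lemma~\ref{lemma:periodicity} yields that $s$ itself has period $d$. Consequently any two valid splits differ by a period of $s$, hence by a multiple of the shortest period $p_s$ of $s$, and the whole set of valid splits lies in a single arithmetic progression of common difference $p_s$ (precomputed once for $s$), intersected with the window $[m-|S|,a]$. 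This is the crucial collapse: although $P$ (and $S$) may individually possess $\Theta(\log m)$ distinct scales of borders, only those splits compatible with the one global period $p_s$ can be valid.

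Finally, for the constant-time detection I would combine Lemma~\ref{lemma:borders preprocessing} (to read off the longest borders, equivalently the shortest periods, of $P$ and $S$) with Lemma~\ref{lemma:few borders} (to describe the relevant long-border group of each snippet as an explicit arithmetic progression) and the coupling $\pi_P+\pi_S=\Delta$. Intersecting the two progressions with the window pins down a constant number of candidate splits $\ell$, each of which I verify directly by a constant number of longest-common-prefix / longest-common-suffix queries via Lemma~\ref{lemma:equality}; if any candidate checks out, an occurrence is reported. The main obstacle is exactly guaranteeing $\mathcal{O}(1)$ rather than $\mathcal{O}(\log m)$ time: a naive matching of periods of $P$ against complementary periods of $S$ would scan every border scale, and it is the periodicity collapse of the previous paragraph that eliminates this. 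I expect the remaining care to be bookkeeping at the degenerate boundaries — empty windows ($\Delta<0$), the contained cases $P=s$ or $S=s$, very short snippets, and the regime where $p_s$ is large so that there are only $\mathcal{O}(1)$ borders to begin with — all of which are finitely many situations handled directly.
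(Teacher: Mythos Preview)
Your reduction to borders, and the reformulation ``there is a crossing occurrence iff $\ell\in\bord(P)$ and $m-\ell\in\bord(S)$'', is exactly the paper's starting point. The structural observation that any two valid splits differ by a period of $s$ is correct and pleasant (though the gluing of two period-$d$ fragments overlapping on $d$ letters is an elementary overlap argument, not Lemma~\ref{lemma:periodicity}). However, this observation does not by itself finish the argument, and the final paragraph contains a genuine gap.

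First, you tacitly assume that \emph{both} $\ell$ is a long border of $P$ and $m-\ell$ is a long border of $S$, so that each side contributes one arithmetic progression. Only one of these is guaranteed: from $x+y=m$ and $|P|,|S|\le m$ one gets $x\ge|P|/2$ or $y\ge|S|/2$, not both. Second, even when both are long, intersecting two arithmetic progressions under the constraint $\pi_P+\pi_S=\Delta$ yields another arithmetic progression that can contain $\Theta(\Delta/\mathrm{lcm}(p_P,p_S))$ points in the window, which is not $\mathcal{O}(1)$. Third, your periodicity collapse fixes only the common difference $p_s$ of the valid splits, not the offset; when $p_s$ is small there can be many residues to test and you give no way to select one. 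So ``pins down a constant number of candidate splits'' is exactly the step that is missing.

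The paper proceeds differently after the same reduction. It fixes, without loss of generality, the side whose border is long, say $x\ge|P|/2$, so the candidate $x$'s form a single progression $i-\alpha p$ with $p=\period(P)$. It then computes the largest $k$ with $p$ a period of $s[1\twodots k]$ (one LCP query), aligns $s[1\twodots k]$ at the maximal feasible shift, and looks for the first mismatch against $S$. The key point is that the position of this mismatch (or its absence) eliminates all but one value of $\alpha$, because any smaller shift would place the character $s[k+1]\neq s[k+1-p]$ over a position already known to agree with the period. One final LCP query verifies that single remaining candidate. This ``extend the period, then locate the first obstacle'' trick is precisely the idea your outline lacks.
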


\begin{proof}
We need to answer the following question: does $s$ occur in $s[1\twodots i]s[j\twodots m]$? Or, in other words, is there $x\in border(s[1\twodots i])$ and $y\in border(s[j\twodots m])$ such that $x+y=m$? Note that either $x\geq\frac{|s[1\twodots i]|}{2}$ or $y\geq\frac{|s[j\twodots m]|}{2}$, and without losing the generality assume the former. From Lemma~\ref{lemma:few borders} we know that all such possible values of $x$ create one arithmetic progression. More specifically, $x=i-\alpha p$, where $p\leq\frac{i}{2}$ is the period of $s[1\twodots i]$ extracted using Lemma~\ref{lemma:borders preprocessing}. We need to check if there is an occurrence of $s$ in $s[1\twodots i]s[j\twodots m]$ starting after the $\alpha p$-th character, for some $0\leq\alpha\leq\frac{i}{p}$. For any such possible interesting shift, there will be no mismatch in $s[1\twodots i]$. There might be a mismatch in $s[j\twodots m]$, though.

Let $k\geq i$ be the longest prefix of $s$ for which $p$ is a period (such $k$ can be calculated efficiently by looking up the longest common prefix of $s[p+1\twodots m]$ and the whole $s$). We shift $s[1\twodots k]$ by $\left\lfloor \frac{\min(i,i-j+1)}{p}\right\rfloor p$ characters. Note this is the maximal shift of the form $\alpha p$ which, after extending $s[1\twodots k]$ to the whole $s$, does not result in sticking out of the right end of $s[j\twodots m]$. Then compute the leftmost mismatch of the shifted $s[1\twodots k]$ with $s[j\twodots m]$, see Figure~\ref{figure:concatenation}. Position of the first mismatch, or its nonexistence, allows us to eliminate all but one interesting shift. More precisely, we have two cases to consider.

\begin{enumerate}

\item There is no mismatch. If $k=m$ we are done, otherwise $s[k+1]\neq s[k+1-p]$, meaning that choosing any smaller interesting shift results in a mismatch.

\item There is a mismatch. Let the conflicting characters be $a$ and $b$ and call the position at which $a$ occurs in the concatenation the {\it obstacle}. Observe that we must choose a shift $\alpha p$ so that $s[1\twodots k]$ shifted by $\alpha p$ is completely on the left of the obstacle. On the other hand, if $s[1\twodots k]$ shifted by $(\alpha + 1) p$ is completely on the left as well, shifting $s[1\twodots k]$ by $\alpha p$ results in a mismatch because $s[k+1]\neq s[k+1-p]$ and $s[k+1-p]$ matches with the corresponding character in $s[j\twodots m]$. Thus we may restrict our attention to the largest shift for which $s[1\twodots k]$ is on the left of the obstacle.

\end{enumerate}

Having identified the only interesting shift, we verify if there is a match using one longest common prefix query on $s$. More precisely, if the shift is $\alpha p$, we check if the common prefix of $s[i-\alpha p\twodots m]$ and $s[j\twodots m]$ is of length $|s[i-\alpha p\twodots m]|$. Overall, the whole procedure takes constant time.
\qed

\begin{figure}
\centering
\includegraphics[width=\linewidth]{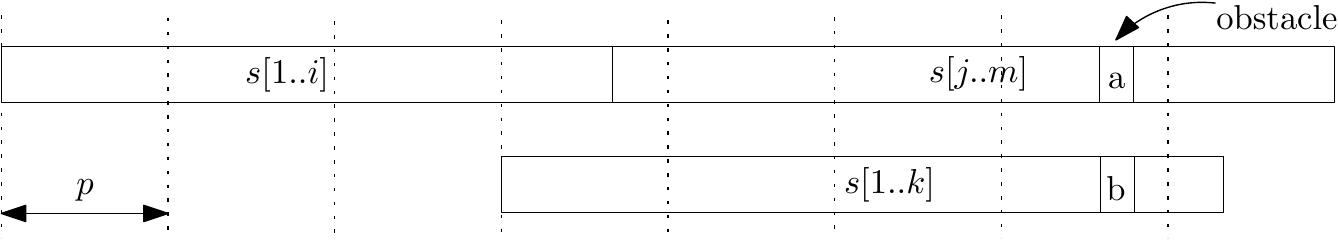}
\caption{Detecting an occurrence in a concatenation of two snippets.}
\label{figure:concatenation}
\end{figure}

\end{proof}

The second result can be deduced from Lemma 6 and Lemma 8 of~\cite{GawrychowskiLZW}, but we prefer to give an explicit proof for the sake of completeness. Its running time is constant as long as $|s_1|$ is bounded from above by a constant fraction of $|s_2|$.

\begin{restatable}{lemma}{lemmaconcatenationprefix}
\label{lemma:concatenation prefix}
Given a prefix snippet $s_1$ and a snippet $s_2$ for which we know the corresponding (explicit or implicit) node in the suffix tree, we can compute the longest prefix of $s$ which is a suffix of $s_1 s_2$ in time $\mathcal{O}\left(\max\left(1,\log\frac{|s_1|}{|s_2|}\right)\right)$.
\end{restatable}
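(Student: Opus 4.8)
The plan is to compute the longest prefix of $s$ that is a suffix of $s_1 s_2$ by locating the boundary between $s_1$ and $s_2$ and analysing the possible alignments near that boundary. A candidate suffix of $s_1 s_2$ that equals a prefix of $s$ either ends entirely inside the $s_2$ part (so it is in fact a suffix of $s_2$), or it straddles the boundary and reaches back into $s_1$. The first case is easy: since we know the node of $s_2$ in the suffix tree, Lemma~\ref{lemma:longest suffix} gives the longest suffix of $s_2$ which is a prefix of $s$ in constant time. So the interesting case is a prefix of $s$ of length more than $|s_2|$, which must consume all of $s_2$ and some suffix of $s_1$.

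For the straddling case I would proceed as follows. Any such prefix $s[1\twodots\ell]$ with $\ell>|s_2|$ must have $s[\ell-|s_2|+1\twodots\ell]=s_2$, i.e.\ it is a prefix of $s$ that \emph{ends with} an occurrence of $s_2$, immediately preceded by a suffix of $s_1$. First I would find all positions in $s$ at which $s_2$ occurs: using the suffix tree node of $s_2$ together with the suffix array $SA$, the occurrences of $s_2$ form a contiguous range of $SA$, so a prefix of $s$ ending with $s_2$ corresponds to a starting position of $s_2$ in $s$ in that range. Each such occurrence at position $q$ yields a candidate prefix $s[1\twodots q+|s_2|-1]$; it is valid precisely when the $|s_1|$-or-fewer characters of $s$ directly to the left of $q$ match a suffix of $s_1$, which by Lemma~\ref{lemma:equality} can be checked by a single longest common suffix query comparing $s[1\twodots q-1]$ against $s_1$. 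I want the one maximizing $\ell$, i.e.\ the largest feasible $q$ whose left context matches a long enough suffix of $s_1$.

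The key to hitting the stated $\mathcal{O}\!\left(\max\!\left(1,\log\frac{|s_1|}{|s_2|}\right)\right)$ bound, rather than paying for every occurrence of $s_2$, is to exploit periodicity. The relevant occurrences of $s_2$ are those lying within the last $|s_1|+|s_2|$ characters swept by the alignment, a window of length roughly $|s_1|$; two occurrences of $s_2$ that are closer than $\frac{|s_2|}{2}$ apart force a period on the overlapping stretch, so by Lemma~\ref{lemma:few borders} and the periodicity lemma (Lemma~\ref{lemma:periodicity}) the ``dense'' occurrences organize into arithmetic progressions governed by a single period $p$, and can be handled in bulk the same way they are in Lemma~\ref{lemma:concatenation occurrence}: extend $s_2$ as far as the period $p$ persists, locate the single leftmost mismatch against $s_1$, and deduce the unique surviving shift. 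The occurrences that are pairwise at least $\frac{|s_2|}{2}$ apart number only $\mathcal{O}\!\left(\frac{|s_1|}{|s_2|}\right)$ in the window, and a binary-search / doubling strategy over these candidates, each test being a constant-time longest common prefix query by Lemma~\ref{lemma:equality}, brings the count down to $\mathcal{O}\!\left(\log\frac{|s_1|}{|s_2|}\right)$; when $|s_1|=\mathcal{O}(|s_2|)$ this collapses to constant time, matching the remark preceding the statement.

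The main obstacle I anticipate is the interaction between the two regimes of occurrences. I must argue that within any short window the occurrences of $s_2$ cannot be both numerous and irregular: periodicity forces the many-occurrence case into a single arithmetic progression, which is exactly what lets me avoid inspecting each one and instead isolate one candidate shift via a leftmost-mismatch computation as in Lemma~\ref{lemma:concatenation occurrence}. Getting the accounting right so that the sparse occurrences contribute the $\log\frac{|s_1|}{|s_2|}$ term while the dense (periodic) ones contribute only $\mathcal{O}(1)$, and then correctly combining the best straddling candidate with the purely-inside-$s_2$ answer from Lemma~\ref{lemma:longest suffix}, is the delicate part of the proof.
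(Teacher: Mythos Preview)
Your decomposition into ``answer entirely inside $s_2$'' versus ``answer straddles the boundary'' is fine, and your observation that the straddling case amounts to finding an occurrence of $s_2$ at some position $q\leq |s_1|+1$ in $s$ such that $s[1\twodots q-1]$ is a suffix of $s_1$ is correct. Since $s_1=s[1\twodots i]$ is a prefix snippet, that last condition is precisely ``$q-1\in\bord(s[1\twodots i])$'', so your occurrence-based viewpoint is dual to the paper's border-based one.

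The gap is the step where you write that a ``binary-search / doubling strategy'' over the $\mathcal{O}(|s_1|/|s_2|)$ sparse occurrences reduces the cost to $\mathcal{O}(\log(|s_1|/|s_2|))$. There is no monotone predicate to binary-search on: for a candidate $q$, the test ``$q-1$ is a border of $s_1$'' is not monotone in $q$, and neither is ``$s_2$ occurs at $q$''. So among the sparse runs of occurrences you have no way to discard half of them with a single constant-time query; you are stuck paying $\mathcal{O}(1)$ per run, which yields $\mathcal{O}(|s_1|/|s_2|)$, not the logarithm. Your periodic (``dense'') case is essentially right and does collapse to $\mathcal{O}(1)$, but that alone only handles the regime $|s_1|=\mathcal{O}(|s_2|)$.

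The paper obtains the logarithm by working on the \emph{other} side of the duality: instead of scanning occurrences of $s_2$, it iterates over the borders of $s_1$. The borders of $s[1\twodots i]$ split into $\mathcal{O}(\log i)$ groups, each an arithmetic progression (the ``long borders'' of the current prefix), and one can test in constant time whether any border in a single group extends by $s_2$ to a prefix of $s$. After handling the long borders one replaces $s_1$ by its longest border of length at most $|s_1|/2$ and repeats; this halves $|s_1|$ each round, so after $\mathcal{O}(\log(|s_1|/|s_2|))$ rounds one reaches $|s_2|\geq 2|s_1|$. At that point all relevant occurrences of $s_2$ lie in a window of length at most $|s_1|\leq|s_2|/2$ and hence form a single arithmetic progression, which is your dense case and is dispatched in $\mathcal{O}(1)$. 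The logarithm thus comes from the border-group structure of $s_1$, not from any search over occurrences of $s_2$; that is the idea your proposal is missing.
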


\begin{proof}
We try to find the longest border of $s_1=s[1\twodots i]$ which can be extended with $s_2$. If there is none, we use Lemma~\ref{lemma:longest suffix} on $s_2$ to extract the answer. Of course $s_1$ might happen to have quite a lot of borders, and we do not have enough time to go through each of them separately. We try to abuse Lemma~\ref{lemma:few borders} instead: there are just $\log |s_1|$ groups of borders, and we are going to process each of them in constant time. It is not enough though, we need something faster when $|s_2|$ is relatively big compared to $|s_1|$. The whole method works as follows: as long as $|s_2|$ is smaller than $2|s_1|$, we check if it is possible to extend any of the long borders of $s_1$. If it is not possible, we replace $s_1$ with the longest prefix of $s$ which ends $s_1[\frac{|s_1|}{2}\twodots |s_1|]$ (we can preprocess such information for all prefixes of $s$ in linear time). When $|s_2|$ exceeds $2|s_1|$, we look for an occurrence of $s_2$ in a prefix of $s$ of length $|s_1|+|s_2|$. All such occurrences create one arithmetic progression due to Lemma~\ref{lemma:few borders}, and it is possible to detect which one is preceded by a suffix of $s_1$ in constant time. More specifically, we show how to implement in constant time the following two primitives. In both cases the method resembles the one from Lemma~\ref{lemma:concatenation occurrence}.

\begin{enumerate}
\item Computing the longest long border of $s_1$ which can be extended with $s_2$ to form a prefix of $s$, if any. First we compute the period $p$ of $s_1$ in constant time due to Lemma~\ref{lemma:longest suffix}, then $p\leq \frac{|s_1|}{2}$ and any long border begins after the $\alpha p$-th letter, for some $\alpha \geq 0$. We compute how far the period extends in both $s$ and $s_2$, this gives us a simple arithmetic condition on the smallest value of $\alpha$. More explicitly, there is either at most one valid $\alpha$, or all are correct.

\item Detecting the rightmost occurrence of $s_2$ in $s$ preceded by a suffix of $s_1$, assuming $|s_2|\geq 2|s_1|$. We begin with finding the first and the second occurrence of $s_2$ in $s$. Assuming we have the corresponding vertex in the suffix tree available, this takes just constant time. We check those (at most) two occurrences naively. There might be many more of them, though. But if the two first occurrences begin before the $|s_1|$-th character, we know that all other interesting occurrences form one arithmetic progression with the known period of $s_2$. We check how far the period extends in $s_1$ (starting from the right end) and $s$ (starting from the first occurrence of $s_2$), this again gives us a simple arithmetic condition on the best possible shift.
\end{enumerate}
\qed
\end{proof}


\section{Constructing balanced grammar}\label{section:constructing}

Recall that a LZ parse is a sequence of triples $(start_i,len_i,next_i)$ for $i=1,2,\ldots,n$. In the not self-referential variant considered in~\cite{CharikarApproximation}, we require that $start_i+len_i-1\leq\sum_{j<i}len_j$ so that each triple refers only to the prefix generated so far. Although such assumption is made by some LZ-based compressors, \cite{Farach} deals with the compressed pattern matching problem in its full generality, allowing self-references. Thus for the sake of completeness we need to construct a balanced grammar from a potentially self-referential LZ parse. It turns out that a small modification of a known method is enough for this task.

\begin{restatable}[see Theorem 1 of~\cite{CharikarApproximation}]{lemma}{lemmabalancedconstruction}
\label{lemma:balanced construction}
Given a (potentially self-referential) LZ parse of size $n$, we can build a $\alpha$-balanced SLP of size $\mathcal{O}(n\log\frac{N}{n})$ describing the same string of length $N$, for any constant $0<\alpha\leq 1-\frac{\sqrt{2}}{2}$. Running time of the construction is proportional to the size of the output.
\end{restatable}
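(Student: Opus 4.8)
The goal is to build an $\alpha$-balanced SLP of size $\mathcal{O}(n\log\frac{N}{n})$ from a (possibly self-referential) LZ parse. My plan is to first reduce to the non-self-referential case, and then invoke the construction of Charikar \emph{et al.}~\cite{CharikarApproximation} essentially as a black box, checking only that the extra machinery needed to handle self-references preserves both the size bound and the balance property.

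First I would treat the non-self-referential case, which is exactly Theorem 1 of~\cite{CharikarApproximation}: each triple $(start_i, len_i, next_i)$ refers to an already-generated prefix $t[start_i\twodots start_i+len_i-1]$, and the balanced-grammar construction produces a nonterminal deriving this substring using only $\mathcal{O}(\log\frac{N}{n})$ productions \emph{on average}, with all productions $X\to YZ$ satisfying $\frac{\alpha}{1-\alpha}\leq\frac{|Y|}{|Z|}\leq\frac{1-\alpha}{\alpha}$. The summation giving the $\mathcal{O}(n\log\frac{N}{n})$ total is a concavity argument (Jensen's inequality on $\log$), which is where integer division enters to split a length into balanced halves in constant time. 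I would cite this directly rather than reprove it.

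The main obstacle, and the only genuinely new content, is handling a self-referential triple, where $start_i + len_i - 1 > \sum_{j<i} len_j$: the fragment being copied overlaps the text it is defining. The key observation is that in this case the referenced fragment is \emph{periodic} with period $p = \sum_{j<i}len_j - start_i + 1$, since $t[start_i\twodots]$ is being extended by copying from $p$ positions earlier. So I would split such a triple into two parts: a nonterminal for the already-defined prefix of the referenced fragment (handled by the non-self-referential machinery), followed by a power of that periodic block. A string that is a $k$-fold repetition of a block of length $p$ admits a balanced SLP of size $\mathcal{O}(\log k)$ by the standard repeated-squaring/balanced-binary-counter trick, and one checks that the resulting productions can be made $\alpha$-balanced. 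Since $k \leq \frac{N}{p} \leq N$, each self-referential triple contributes only $\mathcal{O}(\log N) = \mathcal{O}(\log\frac{N}{n} + \log n)$ productions; summed over all $n$ triples and combined with the $\mathcal{O}(n\log\frac{N}{n})$ from the base construction, the bound is preserved. I would need to confirm that concatenating the prefix nonterminal with the power nonterminal (and then building the whole sequence of $n$ triples into a single start symbol) can itself be done with balanced productions, which again follows from the same balanced-concatenation lemma underlying~\cite{CharikarApproximation}.

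Finally, the running time is proportional to the output size because every production is emitted in constant amortized time: the balanced split of each length uses one integer division, and the periodic-power expansion writes $\mathcal{O}(\log k)$ productions in time $\mathcal{O}(\log k)$. The one subtlety I expect to need care with is the interaction between the balance constant and the periodic blocks --- a single period block of length $p$ together with a nearly-complete final copy might momentarily violate the ratio bound at the top of the power subtree, so I would fold any short leftover piece into its neighbor to keep every internal node within $[\frac{\alpha}{1-\alpha}, \frac{1-\alpha}{\alpha}]$, which is exactly why the admissible range is restricted to $0<\alpha\leq 1-\frac{\sqrt{2}}{2}$.
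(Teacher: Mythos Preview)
Your overall strategy---reduce self-referential triples to a periodic power, build that power by repeated squaring, and otherwise defer to Charikar \emph{et al.}---is the same as the paper's. Two places in your write-up need tightening, though.

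First, your size bound for the self-referential contribution does not close. You write that each self-referential triple costs $\mathcal{O}(\log N)=\mathcal{O}(\log\frac{N}{n}+\log n)$ productions and that summing over $n$ triples ``preserves the bound''; but $n(\log\frac{N}{n}+\log n)=n\log N$, which is \emph{not} $\mathcal{O}(n\log\frac{N}{n})$ in general. The fix is to observe that the number of repetitions $k$ for the $i$-th triple is at most $len_i$, so the cost is $\mathcal{O}(\log len_i)$, and then apply concavity of $\log$ to get $\sum_i \log len_i \leq n\log\frac{N}{n}$. This is exactly the argument the paper makes; you already invoke Jensen for the base construction, you just need to apply it here as well rather than passing through the crude $\log N$ bound.

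Second, and more substantively, your running-time argument misidentifies both the bottleneck and the role of integer division. It is not true that ``every production is emitted in constant amortized time'' in the Charikar \emph{et al.}\ construction: the paper points out that locating the right position in the active symbols list and traversing the current grammar top-down to find the nonterminal to split cannot be amortized against the number of productions created, and a direct implementation takes $\mathcal{O}(n\log N)$ time. The trick that brings this down to $\mathcal{O}(n\log\frac{N}{n})$ is to pre-cut the string into $n$ roughly equal pieces, and \emph{this} is where integer division is required---not, as you suggest, for computing balanced halves of individual lengths. Your proposal as written would yield only an $\mathcal{O}(n\log N)$ running-time guarantee.
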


\begin{proof}
At a very high level, the idea of~\cite{CharikarApproximation} is to process the parse from left-to-right. When processing a triple $(start_i,len_i,next_i)$, we already have an $\alpha$-balanced SLP describing the prefix of the whole text corresponding to the previously encountered triples. Because the grammar is balanced, we can define $t[start_i\twodots start_i+len_i-1]$ by introducing a relatively small number of new nonterminals (with small actually meaning small in the amortized sense). Now if we allow the parse to be self-referential, it might happen that $t[start_i\twodots start_i+len_i-1]$ sticks out from the right end of $t[1\twodots \sum_{j=1}^{i-1}len_j]$. In such case we do as follows: let $L=\sum_{j=1}^{i-1}len_j$, and split the fragment corresponding to the current triple into three parts. First we have $t[start_i\twodots L]$, then some repetitions of the same fragment, and then $t[start_i\twodots len_i\bmod (L-start_i+1)]$ followed by a single letter $next_i$. After defining a nonterminal deriving $t[start_i\twodots L]$, we can define a nonterminal deriving the repetitions at the expense of introducing at most $2\log len_i$ new nonterminals. Then we define a nonterminal deriving $t[start_i\twodots len_i\bmod (L-start_i+1)] next_i$. The only change in the analysis of this method is that we might end up adding $\sum_{i=1}^{n} \log len_i$ new nonterminals, which by the concavity of $\log$ is at most $\mathcal{O}(n\log\frac{N}{n})$, and thus does not change the asymptotic upper bound. Note tha~t the authors of~\cite{CharikarApproximation} were not concerned with the computational complexity of their algorithm. Nevertheless, it is easy to see that the only place which cannot be amortized by the number of new nonterminals is finding the corresponding place at the so-called active symbols list and traversing the grammar top-down in order to find the appropriate nonterminal. The former can be implemented by storing the active list in a balanced search tree, adding $\mathcal{O}(n\log n)$ to the time. The latter adds just $\mathcal{O}(n\log N)$ to the whole running time. Hence we can implement the whole method in $\mathcal{O}(n\log N)$. In order to decrease this complexity to just $\mathcal{O}(n\log\frac{N}{n})$, we cut the string into $n$ parts of roughly the same size. Note that this requires that our computational model allows constant time integer division.

Note that the algorithm in~\cite{CharikarApproximation} contains one special case: if the compression ratio is at most $2e$, the trivial grammar is returned. We do the same.
\qed
\end{proof}

As a result we get a context-free grammar in which all nonterminals derive exactly one string, and right sides of all productions are of the form $XY$ with $\frac{\alpha}{1-\alpha}\leq\frac{|X|}{|Y|}\leq\frac{1-\alpha}{\alpha}$. The exact value of $\alpha$ is not important, we only need the fact that both $\frac{|X|}{|Y|}$ and $\frac{|Y|}{|X|}$ are bounded from above. For the sake of concreteness we assume $\alpha=0.25$. We also need to compute $|X|$ for each nonterminal $X$, and to group the nonterminals according to the (rounded down) logarithm of their length, with the base of the logarithm to be chosen later. Note that taking logarithms of large numbers (i.e., substantially longer than $\log n$ bits) is not necessarily a constant time operations in our model. 
We can use the fact that the grammar is balanced here: if $X\rightarrow YZ$, then $\log_b|X|\leq\beta+\max\left(\log_b|Y|,\log_b|Z|\right)$ for some constant $\beta$ depending only on $\alpha$ and $b$, and the logarithms can be computed for all nonterminals in a bottom-up fashion using just linear time.

\section{Processing balanced grammar}

While the final goal of this section is a $\mathcal{O}(n\log\frac{N}{n}+m)$ time algorithm, we start with a simple $\mathcal{O}(n\log\frac{N}{n}\log m+m)$ time solution, which then is modified to take just $\mathcal{O}(n\log\frac{N}{n}+m\log m)$, and finally $\mathcal{O}(n\log\frac{N}{n}+m)$ time.

For each nonterminal $X$ we would like to check if the string it represents occurs inside $s$. If it does not, we would like to compute $\prefix(X)$ and $\suffix(X)$, the longest prefix (suffix) which is a suffix (prefix) of the whole $s$. Given such information for all possible nonterminals, we can easily detect an occurrence:

\begin{restatable}{lemma}{lemmafirstoccurrence}
\label{lemma:first occurrence}
If $s$ occurs in a string represented by a SLP then there exists a production $X\rightarrow YZ$ such that $s$ occurs in $\suffix(Y)\prefix(Z)$.
\end{restatable}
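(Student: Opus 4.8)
The plan is to exhibit the required production by following a single occurrence of $s$ down the parse tree of the SLP. Fix any occurrence of $s$ inside the string derived by the grammar, say it covers positions $[l\twodots l+m-1]$. We may assume $m\geq 2$; the case $m=1$ is trivial, since a single letter occurring in the derived string is the first letter of the string derived by some right child $Z$, whence $s$ already occurs in $\prefix(Z)$. Starting from the start symbol, we repeatedly descend: at a node with production $X\rightarrow YZ$ the derived substring splits into the part coming from $Y$ and the part coming from $Z$, and if the whole occurrence lies within one of these two parts we move to the corresponding child. Since every occurrence has length at least $2$, it cannot be contained in a node deriving a single letter, so this descent must stop at some internal node $X\rightarrow YZ$ whose derived string contains the occurrence while neither $Y$ nor $Z$ does alone.

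At this stopping node the occurrence straddles the boundary between the $Y$-part and the $Z$-part. Hence there is a split $s=s[1\twodots p]\,s[p+1\twodots m]$ with $1\leq p\leq m-1$ such that $s[1\twodots p]$ coincides with a nonempty suffix of the string derived by $Y$ and $s[p+1\twodots m]$ coincides with a nonempty prefix of the string derived by $Z$. In particular $s[1\twodots p]$ is simultaneously a suffix of $Y$ and a prefix of $s$, while $s[p+1\twodots m]$ is simultaneously a prefix of $Z$ and a suffix of $s$.

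It remains to relate these two fragments to $\suffix(Y)$ and $\prefix(Z)$. By definition $\suffix(Y)$ is the longest suffix of $Y$ that is a prefix of $s$; as $s[1\twodots p]$ is such a suffix we get $p\leq|\suffix(Y)|$, and since two suffixes of the same string are comparable, the shorter one $s[1\twodots p]$ is a suffix of $\suffix(Y)$. Symmetrically, $\prefix(Z)$ is the longest prefix of $Z$ that is a suffix of $s$, so $s[p+1\twodots m]$ is a prefix of $\prefix(Z)$. Concatenating, $s=s[1\twodots p]s[p+1\twodots m]$ appears in $\suffix(Y)\prefix(Z)$, aligned exactly at the junction of the two snippets, which is precisely the claimed occurrence.

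The one point that needs care, and which I expect to be the only real obstacle, is ensuring the split is \emph{proper}: that both the suffix of $Y$ and the prefix of $Z$ singled out by the occurrence are nonempty. This is exactly what the minimality of the stopping node guarantees — had the occurrence been entirely inside $Y$ or entirely inside $Z$, the descent would have continued — and it is precisely this nonemptiness that permits invoking $\suffix(Y)$ and $\prefix(Z)$ rather than the full strings $Y$ and $Z$.
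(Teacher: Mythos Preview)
Your proof is correct and follows the same approach as the paper: descend from the start symbol along a fixed occurrence until it straddles the boundary of some production $X\rightarrow YZ$, then observe that the induced split $s=s[1\twodots p]\,s[p+1\twodots m]$ embeds into $\suffix(Y)\prefix(Z)$ by maximality of those two snippets. One minor slip in your aside for $m=1$: it is not true that every letter of the derived string is the first letter of some right child (the leftmost letter of the whole string is a counterexample), though the case remains trivial by descending to the leaf equal to $s$ and taking its parent production, where either $\suffix(Y)=s$ or $\prefix(Z)=s$.
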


\begin{proof}
Consider the leftmost occurrence of $s$. Take the starting symbol $X=S$ and its production $X\rightarrow YZ$. If the leftmost occurrence is completely inside $Y$ or $Z$, repeat with $X$ replaced with $Y$ or $Z$. Otherwise the occurrence crosses the boundary between $Y$ and $Z$, in other words there is a prefix snippet $s[1\twodots i]$ ending $Y$ and a suffix snippet $s[i+1\twodots m]$ starting $Z$. Then $\left|\suffix(Y)\right|\geq i$ and $\left|\prefix(Z)\right|\geq m-i$, and $s$ occurs in $\suffix(Y)\prefix(Z)$.
\qed
\end{proof}

\begin{restatable}{theorem}{theoremslowest}
\label{theorem:slowest}
Given a (potentially self-referential) Lempel-Ziv parse of size $n$ describing a text $t[1\twodots N]$ and a pattern $s[1\twodots m]$, we can detect an occurrence of $s$ inside $t$ deterministically in time $\mathcal{O}(n\log\frac{N}{n}\log m+m)$.
\end{restatable}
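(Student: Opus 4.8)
The plan is to reduce everything to a single bottom-up pass over a balanced grammar. First I would invoke Lemma~\ref{lemma:balanced construction} to turn the LZ parse into an $\alpha$-balanced SLP with $\mathcal{O}(n\log\frac{N}{n})$ nonterminals in time proportional to its size, and preprocess $s$ in $\mathcal{O}(m)$ time to support all the primitives from the preliminaries: constant-time longest common prefix/suffix of two fragments (Lemma~\ref{lemma:equality}), the longest-border and longest-matching-suffix queries (Lemmas~\ref{lemma:borders preprocessing} and~\ref{lemma:longest suffix}), together with the suffix tree, the suffix array, and LCA. With these in hand the whole computation reduces to attaching to every nonterminal $X$ three pieces of data: whether the string derived by $X$ occurs in $s$ and, if so, its (explicit or implicit) locus in the suffix tree; the snippet $\suffix(X)$; and the snippet $\prefix(X)$. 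All three are needed for every nonterminal, since a parent production consumes the snippets of its children regardless of whether those children occur in $s$.

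These are maintained bottom-up. For a terminal the three quantities are read off in constant time. For a production $X\rightarrow YZ$ I would proceed as follows. To decide whether $YZ$ occurs in $s$ I first observe that both $Y$ and $Z$ must occur (otherwise $YZ$ cannot), which is already recorded; I then run a binary search for $YZ$ in the suffix array, representing $Y$ and $Z$ by the fragments of $s$ at their recorded occurrences. Each comparison of $YZ$ against a suffix $s[SA[\cdot]\twodots]$ costs $\mathcal{O}(1)$ via Lemma~\ref{lemma:equality} (match $Y$ first, then continue with $Z$), so the search takes $\mathcal{O}(\log m)$ and returns the suffix-array range of $YZ$; the corresponding locus is the point at string-depth $|YZ|$ determined by the LCA of the two boundary leaves, obtained in $\mathcal{O}(1)$. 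For $\suffix(X)$, note that any suffix of $YZ$ longer than $|Z|$ ends with all of $Z$, so if $Z$ does not occur in $s$ then $\suffix(X)=\suffix(Z)$ and nothing more is to be done; if $Z$ does occur, the part reaching into $Y$ can only be a suffix of $\suffix(Y)$, whence $\suffix(X)$ equals the longest prefix of $s$ which is a suffix of $\suffix(Y)Z$. This is exactly one application of Lemma~\ref{lemma:concatenation prefix} with $s_1=\suffix(Y)$ and $s_2=Z$ (whose locus is already available), costing $\mathcal{O}(\max(1,\log\frac{|\suffix(Y)|}{|Z|}))=\mathcal{O}(\log m)$. The snippet $\prefix(X)$ is handled by the mirror-image argument on $s^r$. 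Hence each nonterminal is processed in $\mathcal{O}(\log m)$ time.

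Finally, detection follows from Lemma~\ref{lemma:first occurrence}: $s$ occurs in $t$ if and only if for some production $X\rightarrow YZ$ the pattern occurs in $\suffix(Y)\prefix(Z)$, a concatenation of a prefix snippet and a suffix snippet, which Lemma~\ref{lemma:concatenation occurrence} tests in constant time; I simply try every production. Summing up, the construction and preprocessing contribute $\mathcal{O}(n\log\frac{N}{n}+m)$, the bottom-up pass contributes $\mathcal{O}(\log m)$ over each of the $\mathcal{O}(n\log\frac{N}{n})$ nonterminals, and the final pass a further $\mathcal{O}(n\log\frac{N}{n})$, giving the claimed $\mathcal{O}(n\log\frac{N}{n}\log m+m)$ bound. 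The step I expect to be the real obstacle is the bookkeeping around occurrences: one must verify that the locus of $YZ$ (needed to feed $Z$ into Lemma~\ref{lemma:concatenation prefix} at the parent) is recovered without an extra logarithmic factor, and that the reduction of $\suffix(X)$ to a single concatenation query is correct in the boundary cases where $Z$ is itself a prefix of $s$ or fails to occur at all. The $\log m$ factor is inherent in this formulation and is precisely what the paper's later refinements must eliminate.
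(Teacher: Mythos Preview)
Your proposal is correct and follows the paper's own proof almost exactly: build the balanced SLP via Lemma~\ref{lemma:balanced construction}, process it bottom-up computing for each nonterminal an occurrence (if any) together with $\suffix$ and $\prefix$, using an $\mathcal{O}(\log m)$ binary search in the suffix array plus Lemma~\ref{lemma:concatenation prefix}, and finish with Lemma~\ref{lemma:first occurrence} and Lemma~\ref{lemma:concatenation occurrence}.

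The one place where you diverge is precisely the step you flag as the obstacle. The paper stores only an occurrence position of each nonterminal and must then convert an arbitrary snippet $s[i\twodots j]$ into its suffix-tree locus on demand; for this it builds a dedicated linear-size structure (a micro--macro decomposition of the suffix tree with persistent balanced search trees over the macro part) answering such weighted-ancestor queries in $\mathcal{O}(\log m)$. You instead recover the locus of $YZ$ directly from the suffix-array range $[l,r]$ returned by the binary search. This works: if $v$ is the LCA of the leaves for $SA[l]$ and $SA[r]$ (or the single leaf when $l=r$), then the subtree of $v$ is exactly $\{SA[l],\ldots,SA[r]\}$, so the locus of $YZ$ is either $v$ itself (when $v$ has string-depth $|YZ|$) or the implicit point at depth $|YZ|$ on the edge from $\mathrm{parent}(v)$ to $v$; either way, everything Lemmas~\ref{lemma:longest suffix} and~\ref{lemma:concatenation prefix} need (lowest marked ancestor, first and second occurrence) is read off from $v$ and $\mathrm{parent}(v)$ in $\mathcal{O}(1)$. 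So your ``obtained in $\mathcal{O}(1)$'' is justified, and this is a genuine simplification over the paper's weighted-ancestor machinery at no cost to the stated bound.
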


\begin{proof}
By Lemma~\ref{lemma:balanced construction} and Lemma~\ref{lemma:first occurrence}, we only have to compute for each nonterminal $X$ its corresponding snippet (if any) and both $\prefix(X)$ and $\suffix(X)$. We process the productions in a bottom-up order. Assume that we have the information concerning $Y$ and $Z$ available and would like to process $X\rightarrow YZ$. If both $Y$ and $Z$ correspond to substrings of $s$, we can apply binary search in the suffix array to check if their concatenation does as well in $\mathcal{O}(\log m)$ steps, each step consisting of two applications of Lemma~\ref{lemma:equality} used to compare the concatenation with a suffix of $s$. To compute $\prefix(X)$ and $\suffix(X)$ in $\mathcal{O}(\log m)$ time we could use Lemma~\ref{lemma:concatenation prefix}. There is one difficulty here, though: we need to know the corresponding node in the suffix tree. To this end we show how to preprocess the tree in linear time so that the corresponding (implicit or explicit) node can be found in $\mathcal{O}(\log m)$ time.

If we allow as much as $\mathcal{O}(m\log m)$ preprocessing time, the implementation is very simple: for each vertex of the suffix tree we construct
a balanced search tree containing all its ancestors sorted according to their depths. Constructing the tree for a vertex requires inserting just
one new element into its parent tree (note that most standard balanced binary search trees can be made persistent so that inserting a new number creates a new copy and does not destroy the old one) and so the whole construction takes $\mathcal{O}(m\log m)$ time. This is too much by a factor
of $\log m$, though. We use the standard micro-macro tree decomposition to remove it. The suffix tree is partitioned into small subtrees by
choosing at most $\frac{m}{\log m}$ macro nodes such that after removing them we get a collection of connected components of at most
logarithmic size. Such partition can be easily found in linear time. Then for each macro node we construct a binary search tree containing
all its macro ancestors sorted according to their depths. There are just $\frac{m}{\log m}$ macro nodes so the whole preprocessing is linear.
To find the ancestor $v$ at depth $d$ we first retrieve the lowest macro ancestor $u$ of $v$ by following at most $\log M$ edges up from $v$. 
If none of the traversed vertices is the answer, we find the macro ancestor of $u$ of largest depth not smaller than $d$ using the binary search tree in
$\mathcal{O}(\log m)$ time. Then retrieving the answer requires following at most $\log m$ edges up from $u$.
\qed
\end{proof}

We would like to remove the $\log m$ factor from the above complexity. It seems that the main difficulty here is that we need to implement a procedure
for detecting if a concatenation of two substrings of $s$ occurs in $s$ as well, and in order to get the claimed running time we would need to answer such queries in constant time after a linear (or close to linear) preprocessing. We overcome this obstacle by choosing to work with an approximation of this information instead and using the fact that the grammar we are working with is balanced.

\begin{definition}
A cover of a nonterminal $X$ is pair of snippets $s[i\twodots i+2^k-1]$ and $s[j\twodots j+2^k-1]$ such that $2^k<|X|\leq 2^{k+1}$, $s[i\twodots i+2^k-1]$ is a prefix of the string represented by $X$, and $s[j\twodots j+2^k-1]$ is a suffix of the string represented by $X$. We call $k$ the order of $X$'s cover.
\end{definition}

We try to find the cover of each nonterminal $X$. If there is none, we know that the string it represents does not occur inside $s$. In such case we compute $\prefix(X)$ and $\suffix(X)$. More precisely, we either:
\begin{enumerate}
\item compute the cover, in such case the string represented by $X$ might or might no occur in $s$,
\item do not compute the cover, in such case the string represented by $X$ does not occur in $s$.
\end{enumerate}
As we will see later, it is possible to extract $\prefix(X)$ and $\suffix(X)$ from the cover of $X$ using Lemma~\ref{lemma:concatenation prefix} in constant time, and the information about $\prefix(X)$ and $\suffix(X)$ for each nonterminal $X$ is enough to detect an occurrence.

To find the covers we process the nonterminals in groups. Nonterminals in the $k$-th group $\mathcal{G}_\ell = \left\{X_1,X_2,\ldots X_s \right\}$ are chosen so that $(\frac{4}{3})^\ell<|X_i|\leq(\frac{4}{3})^{\ell+1}$. The groups are disjoint so $\sum_\ell\left|\mathcal{G}_\ell\right|=\mathcal{O}(n\log\frac{N}{n})$. Furthermore, the partition can be constructed in linear time.
We start with computing the covers of nonterminals in $\mathcal{G}_1$ naively. Then we assume that all nonterminal in $\mathcal{G}_{\ell-1}$ are already processed, and we consider $\mathcal{G}_\ell$. Because the grammar is $0.25$-balanced, if $X_i\rightarrow Y_i Z_i$ then $|Y_i|,|Z_i|\leq\frac{3}{4}|X_i|$, and $Y_i, Z_i$ belong to already processed $\mathcal{G}_{\ell'}$ with $\ell-5 \leq\ell'<\ell$. If for some $Y_i$ or 	$Z_i$ we do not have the corresponding cover, neither must have the corresponding $X_i$, so we use Lemma~\ref{lemma:concatenation prefix} to calculate $\prefix(X_i)$, $\suffix(X_i)$, and remove $X_i$ from $\mathcal{G}_\ell$. For all remaining $X_i$ we are left with the following task: given the covers of $Y_i$ and $Z_i$, compute the cover of $X_i$, or detect that the represented string does not occur in $s$ and so we do not need to compute the cover. Note that the known covers are of order $k$ with $k_{min}=\left\lfloor\ell\log\frac{4}{3}\right\rfloor-3\leq k\leq\left\lceil\ell\log\frac{4}{3}\right\rceil=k_{max}$.

We reduce computing covers to a sequence of batched queries of the form: given a sequence of pairs of snippets $s[i\twodots i+2^{k_1}-1]$, $s[j\twodots j+2^{k_2}-1]$ does their concatenation occur in $s$, and if so, what is the corresponding snippet? We call this merging the pair. For each $\ell$ we will require solving a constant number of such problems with $k_{min}\leq k_1,k_2\leq k_{max}$, each containing $\mathcal{O}(|\mathcal{G}_\ell|)$ queries. We call this problem \proc{Batched-powers-merge}. Before we develop an efficient solution for such question, lets see how it can be used to compute covers.

\begin{restatable}{lemma}{lemmacoversreduction}
\label{lemma:covers reduction}
Computing covers of the nonterminals in any $\mathcal{G}_\ell$ can be reduced in linear time to a constant number of calls to \proc{Batched-powers-merge}, with the number of pairs in each call bounded by $\left|\mathcal{G}_\ell\right|$.
\end{restatable}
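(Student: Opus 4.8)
The plan is to show that for each production $X_i\to Y_iZ_i$ in $\mathcal{G}_\ell$ whose children both have covers, the order-$k'$ cover of $X_i$ (its length-$2^{k'}$ prefix and suffix, viewed as snippets of $s$) can be assembled from the covers $P_{Y},S_{Y}$ of $Y_i$ and $P_{Z},S_{Z}$ of $Z_i$ using only a constant number of merges, and then to batch all of these merges over the whole group into a constant number of \proc{Batched-powers-merge} calls. By reversing $s$ the computation of the suffix snippet is symmetric to that of the prefix snippet, so I would only describe the prefix. Throughout I would lean on the balance of the grammar, which forces $\frac{|X_i|}{4}\le|Y_i|,|Z_i|\le\frac{3|X_i|}{4}$ and hence places all the relevant orders $k',k_Y,k_Z$ inside the window $[k_{min},k_{max}]$, whose width is a fixed constant.

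Two reconstruction primitives would drive the argument. First, any prefix or suffix of $Y_i$ (resp.\ $Z_i$) of length at most $2^{k_Y}$ (resp.\ $2^{k_Z}$) is merely a truncation of one of the stored cover snippets, so it is already a snippet of $s$ at no cost; and since $2^{k_Y+1}\ge|Y_i|$, the pair $P_Y,S_Y$ in fact covers all of $Y_i$ with overlap, and likewise for $Z_i$. Second, and crucially, a substring of $X_i$ that straddles the boundary between $Y_i$ and $Z_i$ is literally the concatenation of a suffix of $Y_i$ and a prefix of $Z_i$; in particular $S_YP_Z$ is a genuine contiguous substring of $X_i$, so a single merge of the pair $(S_Y,P_Z)$ either locates it in $s$ or certifies that it does not occur. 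I would run this one central merge to obtain a snippet $M$ for which $P_Y$, $M$ and $S_Z$ together cover all of $X_i$ contiguously: their ranges overlap because $|Y_i|\le 2^{k_Y+1}$ and $|Z_i|\le 2^{k_Z+1}$.

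With $X_i$ thus available as three overlapping snippets of $s$, I would obtain its length-$2^{k'}$ prefix by splitting it into two halves of order $k'-1$ and applying one final merge; each half is either a truncation of $P_Y$, $M$ or $S_Z$ when it lies inside a single piece, or is itself produced by a further straddling merge in the few boundary cases. Because every length in sight is within a constant factor of $2^{k'}$, this bottoms out after a constant number of halvings (all orders stay in $[k_{min},k_{max}]$, using that $k'\ge k_{min}+2$), so each $X_i$ generates only $\mathcal{O}(1)$ merge queries arranged into $\mathcal{O}(1)$ dependency rounds; if any merge reports non-occurrence, the length-$2^{k'}$ prefix does not occur in $s$, hence $X_i$ has no cover and is deferred to the $\prefix/\suffix$ computation. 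Finally I would batch: group all generated pairs by their order pair $(k_1,k_2)$, of which there are only $\mathcal{O}(1)$ since the orders span a constant-width window, and resolve each group within a round by a single \proc{Batched-powers-merge} call carrying at most one pair per nonterminal, hence at most $|\mathcal{G}_\ell|$ pairs. All the case selection, offset arithmetic and read-back is $\mathcal{O}(1)$ per nonterminal, giving the claimed linear-time reduction.

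The step I expect to be the main obstacle is keeping the number of merges constant. A naive dyadic reconstruction of a boundary-crossing substring recurses to depth $\Theta(\log|X_i|)$, which would blow up the number of \proc{Batched-powers-merge} calls; the central merge $S_YP_Z$ is exactly what prevents this, capturing a large middle chunk of $X_i$ in one shot so that only a constant number of halvings remains. The accompanying difficulty is the bookkeeping that guarantees at most $|\mathcal{G}_\ell|$ pairs per batched call and schedules dependent merges into the correct round, but this becomes routine once the $\mathcal{O}(1)$ bound on the number and the orders of the merges per nonterminal has been established.
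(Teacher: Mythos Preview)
Your overall plan matches the paper's: perform the central merge $M=S_YP_Z$, cover $X_i$ by the three snippets $P_Y,M,S_Z$, and assemble the order-$k'$ cover from them by a constant number of power-of-two merges. The gap is in the step where you claim the recursive halving ``bottoms out after a constant number of halvings (all orders stay in $[k_{\min},k_{\max}]$)''. Staying in that window only bounds the \emph{depth} of the recursion by $k'-k_{\min}=\mathcal{O}(1)$; it does \emph{not} guarantee that at that depth every dyadic block actually lies inside a single piece. The overlap between $M$ and $S_Z$ equals $2^{k_Z+1}-|Z_i|$, which can be zero (when $|Z_i|=2^{k_Z+1}$) or any small positive integer, and the same holds for $P_Y$ and $M$. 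In that situation an aligned block of length $2^{k_{\min}}$ straddling the $M$/$S_Z$ boundary remains straddling after every further halving down to length~$1$, so your recursion does not terminate within the allowed order range. Concretely, with $|Y_i|=400$, $|Z_i|=900$, $|X_i|=1300$ (hence $k'=10$, $k_Y=8$, $k_Z=9$) one gets $P_Y=[1,256]$, $M=[145,912]$, $S_Z=[789,1300]$, and the aligned block $[769,1024]$ of order $8$ lies in none of the three pieces.

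The paper closes this hole with one extra preparatory round of merges \emph{before} building the cover: it extends $d=S_Z$ to the left (and $a=P_Y$ to the right) by merging each with the adjacent length-$2^{k}$ fragment taken from $M$, producing $\mathrm{extend}(a)$ and $\mathrm{extend}(d)$. This forces the overlap of $M$ with the extended right piece to be at least $2^{k}$, so that every subsequent length-$2^{k}$ chunk needed while sweeping $a$ rightwards is guaranteed to sit inside $M$ or inside $\mathrm{extend}(d)$. That single step is precisely what your argument is missing; once you add it, either the paper's iterative extension or your halving scheme goes through in $\mathcal{O}(1)$ power-of-two merges per nonterminal, and the batching you describe is then correct.
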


\begin{proof}
Recall that for each given pair of snippets we have their covers available, and the orders of those covers are from $\{k,k+1,\ldots,k+4\}$. Consider the situation for a single pair, see Figure~\ref{figure:merging pair}. Let $a,b$ be the cover of the first snippet and $c,d$ the cover of the second snippet. First we merge $b$ and $c$ to get $\text{merge}(b,c)$. Then we extend $a$ to the right and $d$ to the left by merging with the corresponding fragments of $\text{merge}(b,c)$ of length $2^k$, and call the results $\text{extend}(a)$ and $\text{extend}(d)$. Then we would like iteratively extend both $a$ and $d$ with fragments of such length as long as it does not result in sticking out of the considered word $w$. To do that, we need to have the snippets corresponding to those fragments available. Consider the situation for $a$: first we extract the snippets from $\text{merge}(b,c)$, then from $\text{extend}(d)$. We claim that we are always able to perform such extraction: if the next $2^k$ characters fall outside $\text{merge}(b,c)$, the distance to the left boundary of $d$ does not exceed $2^k$ and thus we can use $\text{extend}(d)$. If during this extending procedure the merging fails, the pair does not represent a substring of $s$. Otherwise we get the snippet corresponding to the prefix and suffix of $w$ of lengths $|w|-|w|\bmod 2^k$, which allows us to extract the prefix and suffix of length $2^{k'}$ where $2^{k'}<|w|\leq 2^{k'+1}$, because $k\leq k'$. 

To finish the proof, note that for a single pair we need a constant number of merges. Thus we can do the merging in parallel for all pairs in a constant number of calls to \proc{Batched-powers-merge}.
\qed
\begin{figure}
\centering
\includegraphics[width=\linewidth]{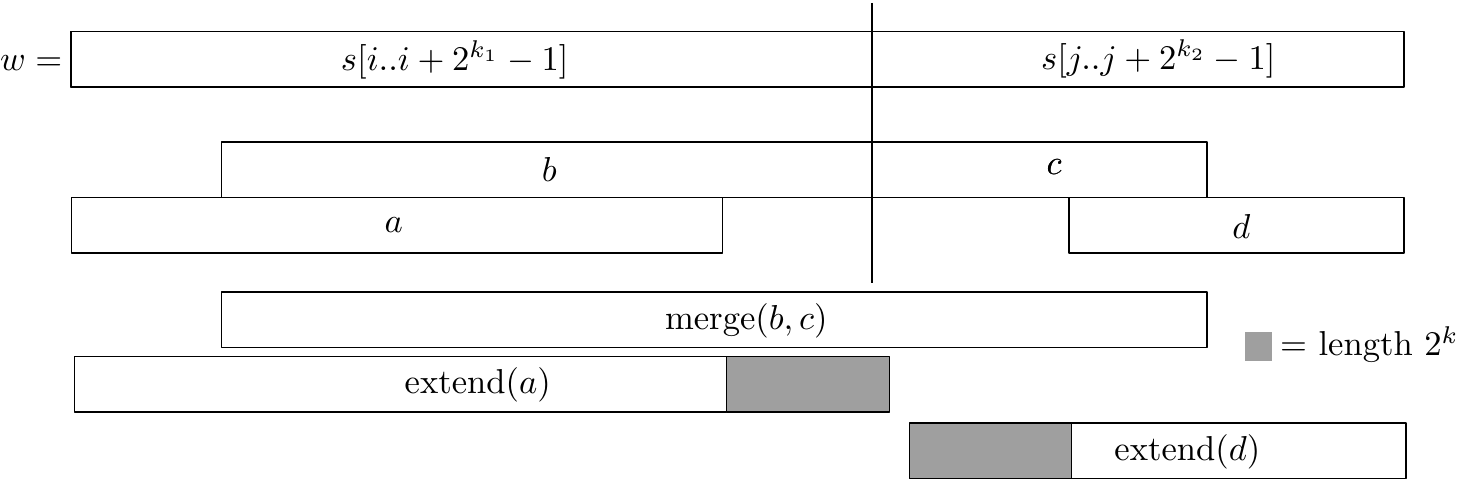}
\caption{Computing cover of a pair of snippets.}
\label{figure:merging pair}
\end{figure}	
\end{proof}

Now we only have to develop the algorithm for \proc{Batched-powers-merge}. A simple solution would be to do a binary search in the suffix array built for $s$ for each pair separately: we can compare $s[i\twodots i+2^{k_1}-1] s[j\twodots j+2^{k_2}-1]$ with any suffix of $s$ in constant time using at most two longest common prefix queries so a single search takes $\mathcal{O}(\log m)$ time, which gets us back to the bounds from Theorem~\ref{theorem:slowest}. In order to get a better running time we aim to exploit the fact that we are given many pairs at once. First observe that we can order all concatenations from a single problem efficiently.

\begin{restatable}{lemma}{lemmapairssort}
\label{lemma:pairs sort}
Given $\mathcal{O}(\left|\mathcal{G}_\ell\right|)$ pairs of words of the form $s[i\twodots i+2^{k_1}-1]$, $s[j\twodots j+2^{k_2}-1]$ with $k_{min}\leq k_1,k_2\leq k_{max}$ we can lexicographically sort their concatenations in time $\mathcal{O}(\left|\mathcal{G}_\ell\right|+m^\epsilon)$ if $|k_{max}-k_{min}|\in\mathcal{O}(1)$.
\end{restatable}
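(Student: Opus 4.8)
The plan is to exploit two facts: first, that $|k_{max}-k_{min}|=\mathcal{O}(1)$ means there are only $\mathcal{O}(1)$ distinct \emph{shapes} $(k_1,k_2)$ among the pairs, and second, that all concatenations sharing a fixed shape have the same length and the same position of the boundary between the two snippets. I would begin by bucketing the input pairs according to their shape $(k_1,k_2)$; since there are only $\mathcal{O}(1)$ buckets this costs $\mathcal{O}(|\mathcal{G}_\ell|)$. Inside a single bucket every concatenation is $s[i\twodots i+2^{k_1}-1]\,s[j\twodots j+2^{k_2}-1]$ with $k_1,k_2$ fixed, so comparing two of them lexicographically amounts to comparing their first snippets (of equal length $2^{k_1}$) and breaking ties by comparing their second snippets (of equal length $2^{k_2}$). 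Hence, if I can attach to each snippet a dense integer rank that is consistent with the lexicographic order among power-of-two substrings of the same length, sorting a bucket reduces to sorting the resulting pairs of ranks, which a two-pass counting sort does in $\mathcal{O}(|\mathcal{G}_\ell|)$ time.

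The heart of the argument is therefore computing these ranks for the $\mathcal{O}(|\mathcal{G}_\ell|)$ snippets actually appearing, for each of the $\mathcal{O}(1)$ relevant lengths $L=2^{k}$ with $k_{min}\le k\le k_{max}$, without ever touching all $m$ positions of $s$. For a fixed $L$ I would collect the starting positions of the appearing length-$L$ snippets and look up for each its position in the suffix array (the inverse permutation $SA^{-1}$, which is stored). Because all suffixes sharing a common length-$L$ prefix occupy a contiguous range of $SA$, ordering the appearing snippets by their $SA^{-1}$ value already agrees with the lexicographic order of their length-$L$ prefixes (ties, corresponding to equal length-$L$ windows, stay grouped). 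I would then walk the ordered list and assign dense ranks, increasing the rank exactly when two consecutive snippets differ, which I can test with a single longest common prefix query from Lemma~\ref{lemma:equality} (equal length-$L$ windows are precisely those whose longest common prefix is at least $L$). This yields ranks $\rank_L$ in a range of size $\mathcal{O}(|\mathcal{G}_\ell|)$.

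Having the per-shape sorted sublists, I would finish by merging the $\mathcal{O}(1)$ of them into a single sorted sequence. Here I cannot reuse the ranks, since snippets of different lengths live on incomparable rank scales and the boundaries are misaligned across shapes; instead I would compare two concatenations directly. As each concatenation is a juxtaposition of at most two substrings of $s$, their longest common prefix can be computed with a constant number of calls to Lemma~\ref{lemma:equality} (advancing across at most a constant number of segment boundaries, exactly as in the proof of Lemma~\ref{lemma:concatenation occurrence}), so each comparison is $\mathcal{O}(1)$ and an ordinary $\mathcal{O}(1)$-way merge runs in $\mathcal{O}(|\mathcal{G}_\ell|)$ time.

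The step that forces the $m^\epsilon$ term, and the one I expect to be the real obstacle, is the ranking: the $SA^{-1}$ values lie in the range $[1,m]$, so a naive counting sort would spend $\Theta(m)$ and, summed over all groups, ruin the bound. The remedy is to radix sort the $\mathcal{O}(|\mathcal{G}_\ell|)$ appearing values in base $m^{\epsilon}$; this uses $\lceil 1/\epsilon\rceil=\mathcal{O}(1)$ rounds of counting sort, each costing $\mathcal{O}(|\mathcal{G}_\ell|+m^{\epsilon})$, for a total of $\mathcal{O}(|\mathcal{G}_\ell|+m^{\epsilon})$. Crucially, I must rank only the positions that actually occur among the given pairs rather than build a global rank array over all of $s$, since the latter is precisely the $\Theta(m)$ work the statement is designed to avoid. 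Combining the ranking, the per-bucket counting sorts, and the final merge gives the claimed $\mathcal{O}(|\mathcal{G}_\ell|+m^{\epsilon})$ running time.
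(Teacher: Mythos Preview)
Your argument is correct and matches the paper's proof in its essentials: both use the inverse suffix array to key the relevant power-of-two substrings, radix sort those keys in base $m^{\epsilon}$ (which is exactly where the $m^{\epsilon}$ term arises), assign dense ranks via a scan with constant-time longest-common-prefix queries, and then reduce to sorting constant-length tuples of these ranks. The only organizational difference is that the paper chops every concatenation into chunks of the single length $2^{k_{min}}$ (which divides all the snippet lengths, so boundaries align), ranks once, and radix sorts the resulting constant-length vectors directly, thereby avoiding your per-shape bucketing and final $\mathcal{O}(1)$-way merge; both variants give the stated bound.
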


\begin{proof}
We split the words to be sorted into a constant number of chunks of length $2^{k_{min}}$. Then we would like to assign numbers to those chunks so that $\nr(s[i\twodots i+2^{k_{min}}-1])<\nr(s[j\twodots j+2^{k_{min}}-1])$ iff $s[i\twodots i+2^{k_{min}}-1]) <_{lex} s[j\twodots j+2^{k_{min}}-1])$. To compute all $\nr(s[i\twodots i+2^{k_{min}}-1])$ we retrieve the positions of $s[i\twodots m]$ in the suffix array. Then we sort the resulting list of $\mathcal{O}(\left|\mathcal{G}_\ell\right|)$ integers using radix sort, i.e., by $\frac{1}{\epsilon}$ rounds of counting sort. The time required by this sorting is linear plus $\mathcal{O}(m^\epsilon)$. After sorting we scan the list and identify different suffixes with the same prefix of length $2^{k_{min}}$, such suffixes belong to continuous blocks whose boundaries can be identified using longest prefix queries. Then the original task reduces to sorting a list of constant length vectors consisting of integers not exceeding $m$, which can be done efficiently using radix sort.
\qed
\end{proof}

We apply the above lemma to all calls to \proc{Batched-powers-merge} corresponding to nonempty $\mathcal{G}_\ell$. If $(\frac{4}{3})^\ell > m$ then clearly the corresponding $\mathcal{G}_\ell$ is empty, so the total running time of this part is just $\mathcal{O}(m^\epsilon\log m+\sum_\ell\left|\mathcal{G}_\ell\right|)=\mathcal{O}(m+n\log\frac{N}{n})$. Now that the queries in a single call to \proc{Batched-powers-merge} are sorted, instead of performing a separate binary search for each of them we can scan the queries and the suffix array at once, resulting in a $\mathcal{O}(|\mathcal{G}_{\ell}|+m)$ running time for each different $\ell$. This gives us the following total running time.

\begin{theorem}\label{theorem:slower}
Given a (potentially self-referential) Lempel-Ziv parse of size $n$ describing a text $t[1\twodots N]$ and a pattern $s[1\twodots m]$, we can detect an occurrence of $s$ inside $t$ deterministically in time $\mathcal{O}(n\log\frac{N}{n}+m\log m)$.
\end{theorem}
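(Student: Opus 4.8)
The plan is to assemble the machinery of the preceding lemmas into a single bottom-up pass over a balanced grammar. First I would invoke Lemma~\ref{lemma:balanced construction} to turn the (possibly self-referential) LZ parse into an $\alpha$-balanced SLP of size $\mathcal{O}(n\log\frac{N}{n})$, built in time proportional to its size. By Lemma~\ref{lemma:first occurrence}, it then suffices to know $\prefix(X)$ and $\suffix(X)$ for every nonterminal $X$: once these are available, a single scan of all $\mathcal{O}(n\log\frac{N}{n})$ productions $X\to YZ$ that applies Lemma~\ref{lemma:concatenation occurrence} to $\suffix(Y)\prefix(Z)$ in constant time per production detects an occurrence of $s$ within the grammar-size budget. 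So the real task is to produce this prefix/suffix data cheaply.

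To do that I would process the nonterminals in increasing order of the groups $\mathcal{G}_\ell$ and try to maintain a cover for each. When processing $X\to YZ$, if either child lacks a cover then $X$ cannot have one either, and I would compute $\prefix(X)$, $\suffix(X)$ in constant time straight from the children's data via Lemma~\ref{lemma:concatenation prefix} (its running time is constant here because balancedness bounds both $\frac{|Y|}{|Z|}$ and $\frac{|Z|}{|Y|}$ by constants). For the remaining nonterminals, whose children both carry covers, Lemma~\ref{lemma:covers reduction} reduces computing all their covers to a constant number of calls to \proc{Batched-powers-merge}, each with at most $|\mathcal{G}_\ell|$ pairs. Having obtained a cover for $X$, one further application of Lemma~\ref{lemma:concatenation prefix} extracts $\prefix(X)$ and $\suffix(X)$ in constant time.

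The key per-group step is then to solve \proc{Batched-powers-merge} faster than the naive $|\mathcal{G}_\ell|$ independent binary searches. I would first lexicographically order all concatenations of a single call using Lemma~\ref{lemma:pairs sort} in time $\mathcal{O}(|\mathcal{G}_\ell|+m^\epsilon)$, and then, instead of searching for each sorted concatenation separately, merge the sorted query list against the suffix array of $s$ in one synchronized linear scan. This answers every query in the call at total cost $\mathcal{O}(|\mathcal{G}_\ell|+m)$ rather than $\mathcal{O}(|\mathcal{G}_\ell|\log m)$.

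It remains to sum the costs. Because a nonterminal whose length exceeds $\Theta(m)$ has no cover — its cover snippets would be longer than $s$ — only $\mathcal{O}(\log m)$ groups give rise to a nonempty call, while longer nonterminals are each handled in constant time via Lemma~\ref{lemma:concatenation prefix}, contributing $\mathcal{O}(n\log\frac{N}{n})$. The sorting phase totals $\mathcal{O}(m^\epsilon\log m+\sum_\ell|\mathcal{G}_\ell|)=\mathcal{O}(m+n\log\frac{N}{n})$, and the scanning phase totals $\mathcal{O}(\sum_\ell|\mathcal{G}_\ell|+m\log m)=\mathcal{O}(n\log\frac{N}{n}+m\log m)$; adding the $\mathcal{O}(n\log\frac{N}{n})$ final production scan of Lemma~\ref{lemma:concatenation occurrence} yields the claim. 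The main obstacle, and the source of the dominant $m\log m$ term, is precisely that each of the $\Theta(\log m)$ nonempty groups forces a fresh $\mathcal{O}(m)$ sweep of the suffix array: the synchronized-scan trick replaces the $\log m$ factor \emph{inside} a group's searches by a single $\mathcal{O}(m)$ sweep, but that per-group $\mathcal{O}(m)$ overhead survives summation over $\Theta(\log m)$ groups. Removing this last logarithmic factor is exactly what the subsequent, more delicate refinement is designed to achieve.
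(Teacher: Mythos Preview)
Your proposal is correct and follows essentially the same route as the paper: build the balanced SLP, group nonterminals into the $\mathcal{G}_\ell$, reduce cover computation to \proc{Batched-powers-merge} via Lemma~\ref{lemma:covers reduction}, sort each batch with Lemma~\ref{lemma:pairs sort}, and replace the per-query binary searches by a single synchronized scan of the suffix array, so that each of the $\mathcal{O}(\log m)$ nonempty groups costs $\mathcal{O}(|\mathcal{G}_\ell|+m)$. Your identification of the $m\log m$ bottleneck as the per-group $\mathcal{O}(m)$ sweep summed over $\Theta(\log m)$ groups is exactly the paper's diagnosis as well.
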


This is still not enough to improve~\cite{Farach} on all possible inputs. We would like to replace $m\log m$ by $m$ in the above complexity by focusing on improving the running time of \proc{Batched-powers-merge}. At a high level the idea is to consider the queries in a single call in sorted order, and for each of them perform a binary search starting from the place where the lexicographically previous pair was found at. This might be still too slow though. To accelerate the search we develop a constant time procedure for locating the fragment of the suffix array corresponding to all occurrences of any $s[i\twodots i+2^k-1]$.

\begin{restatable}{lemma}{lemmafastancestor}
\label{lemma:fast ancestor}
The pattern $s$ can be processed in linear time so that given any $s[i\twodots i+2^k-1]$ we can compute its first and the last occurrence in the suffix array of $s$ in constant time.
\end{restatable}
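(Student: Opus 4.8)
The plan is to reduce the query to a weighted ancestor query in the suffix tree of $s$ and then exploit the fact that the only admissible string depths are powers of two. Build the suffix tree $T$ for $s$, and recall that every factor $s[i\twodots i+2^k-1]$ corresponds to a (possibly implicit) locus reached by descending from the root along the path to the leaf $\ell_i$ representing the suffix $s[i\twodots m]$, stopping at string depth exactly $2^k$. Since nothing branches in the interior of an edge, the set of suffixes prefixed by $s[i\twodots i+2^k-1]$ is exactly the set of leaves below the shallowest explicit node $v$ on this path whose string depth is at least $2^k$ (its parent having string depth $<2^k$, so that the incoming edge of $v$ straddles depth $2^k$). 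After a single DFS I store for every explicit node the contiguous interval $[\mathrm{lo}(v),\mathrm{hi}(v)]$ of suffix-array positions occupied by the leaves of its subtree; this interval is precisely the first and last occurrence we are asked for. Hence it suffices to locate $v$, the shallowest ancestor of $\ell_i$ of string depth $\ge 2^k$, in constant time.

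This is a weighted ancestor query, but with the crucial restriction that the threshold is one of only $\lfloor\log m\rfloor+1$ powers of two. I would solve it with the same micro--macro decomposition used in Theorem~\ref{theorem:slowest}: partition $T$ into at most $\frac{m}{\log m}$ macro nodes whose removal leaves components of at most logarithmic size. The source of the $\log m$ factor there was a binary search over the macro ancestors; here the power-of-two restriction lets me replace it by direct indexing. For every macro node $u$ and every level $k$ store $B[u][k]$, the shallowest macro ancestor of $u$ (inclusive) whose string depth is at least $2^k$. Because string depth increases monotonically along every root-to-leaf path, these arrays are filled bottom-up in $\mathcal{O}(1)$ amortized per entry, and the total size is $\mathcal{O}\!\left(\frac{m}{\log m}\cdot\log m\right)=\mathcal{O}(m)$ --- this is exactly where the restriction to $\log m$ distinct thresholds buys linear space.

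To answer a query $(i,k)$ I first follow at most $\log m$ edges up from $\ell_i$ to its lowest macro ancestor $u$, checking the traversed nodes directly (this already resolves the case where $v$ lies in the micro component of $\ell_i$). Then $B[u][k]$ yields the shallowest macro ancestor $c$ of string depth $\ge 2^k$, and the true answer $v$ lies on the short portion of the path between $c$ and the macro node immediately above it, a connected piece of a single micro component containing at most $\log m$ nodes whose string depths are monotone. The threshold-crossing node among them is found in constant time by a table lookup on a word of size $\log m$, which the computational model permits us to precompute. Returning $[\mathrm{lo}(v),\mathrm{hi}(v)]$ completes the query.

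The main obstacle is precisely obtaining genuine constant time rather than the $\mathcal{O}(\log m)$ of Theorem~\ref{theorem:slowest} while keeping the preprocessing linear, and everything hinges on the power-of-two constraint: it caps the number of distinct thresholds at $\log m$, so that per-level pointers at the $\frac{m}{\log m}$ macro nodes fit in linear space and replace the binary search, and it aligns the residual search inside a micro component with the word-size tables allowed by the model. The remaining care is in the edge cases --- suffixes shorter than $2^k$, for which the factor is undefined and no answer is required, and leaves lying above all macro nodes --- which are handled directly during the initial $\log m$-step walk.
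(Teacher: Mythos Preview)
Your proposal has a real running-time gap. You correctly identify that the obstacle is getting constant query time rather than the $\mathcal{O}(\log m)$ of Theorem~\ref{theorem:slowest}, yet the very first step of your query --- ``follow at most $\log m$ edges up from $\ell_i$ to its lowest macro ancestor $u$, checking the traversed nodes directly'' --- is itself a $\Theta(\log m)$ walk. As written, the query therefore costs $\mathcal{O}(\log m)$, which is precisely what you set out to improve. A repair would require (a) a stored pointer from each leaf to its lowest macro ancestor and (b) a constant-time mechanism inside the leaf's own micro component. You do gesture at such a mechanism for the \emph{other} micro component (``a table lookup on a word of size $\log m$''), but the description is too vague to stand on its own: the string depths of the micro-component nodes are arbitrary integers up to $m$, so it is not obvious what the $\log m$-bit word encodes or what the table is keyed on. Making this explicit (for instance, an ancestor bitmask per node together with a per-component, per-$k$ bitmask of nodes whose depth meets the threshold, then an AND and a lowest-set-bit) would close both gaps simultaneously --- but that argument is not in the write-up.

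For comparison, the paper avoids the weighted-ancestor difficulty altogether. It first makes the implicit loci at power-of-two string depths explicit (splitting each original edge at most once, at the smallest straddled power of two), marks these together with any original vertices at power-of-two depths, and compresses the suffix tree to a tree $T'$ containing only marked vertices and leaves. Each leaf then has at most $\log m$ marked ancestors, and a single $\log m$-bit word $\mathrm{marked}(v)$ records which powers of two actually occur among their depths; a popcount on this word converts the query into an \emph{unweighted} level-ancestor query in $T'$. The micro--macro decomposition is applied to $T'$ (not to $T$), and because $T'$ has depth $\mathcal{O}(\log m)$, the micro side is a clean $\log m$-bit trick with no weighted search remaining.
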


\begin{proof}
It is enough to show that the suffix tree $T$ built for $s$ can be preprocessed in linear time so that we can locate the (implicit or explicit) vertex corresponding to any fragment which is a power of $2$ in constant time. For that we should locate an ancestor of a given leaf which is at specified depth $2^k$. This can be reduced to the so-called weighted ancestor queries: given a node-weighted tree, with the weights nondecreasing on any root-to-leaf path, preprocess it to find the predecessor of a given weight among the ancestors of $v$ efficiently. Unfortunately, all known solutions for this problem~\cite{FarachHashing,KopelowitzAncestors} give nonconstant query time. We wish to improve this time by abusing the fact that only ancestors at depths $2^k$ are sought. First note that such ancestor is not necessarily an explicit vertex. We start with considering all edges of $T$. For each such edge $e$, we compute the smallest $k$ such that $e$ contains an implicit vertex at depth $2^k$ (there might be none), and split the edge to make it explicit. We call all original vertices at depths being powers of $2$, and all new vertices, marked. For each vertex $v$ we would like to compute the depths of all its marked ancestors, see Figure~\ref{figure:marked}. This can be done in linear time by a single top-bottom transversal, and the information can be stored in a single $\Theta(\log |s|)$-bit word. More precisely, for each vertex $v$ we construct a single word $\text{marked}(v)$ with the $k$-th bit set iff $v$ has a marked ancestor at depth $2^k$. Then we construct $T'=\text{compress}(T)$ containing only the leaves and marked vertices of $T$ by collapsing all maximal fragments of $T$ without such vertices, and build the level ancestor data structure for $T'$~\cite{BenderAncestor} allowing us to find the $k$-th ancestor of any vertex in constant time. Now given $i$ and $k$ we first locate the leaf $v$ corresponding to $s[i\twodots |s|]$ in $T$, then take a look at its bitvector $\text{marked}(v)$. We can compute in constant time $t=\left\{k' > k : k'\in\text{marked}(v)\right\}$ and retrieve the $t$-th ancestor of $v$ in $T'$. Going back to $T$ we get a node with the same (lexicographically) smallest and largest suffix in its subtree as the node corresponding to $s[i\twodots i+2^k-1]$. 

\begin{figure}
\centering
\includegraphics[width=\linewidth]{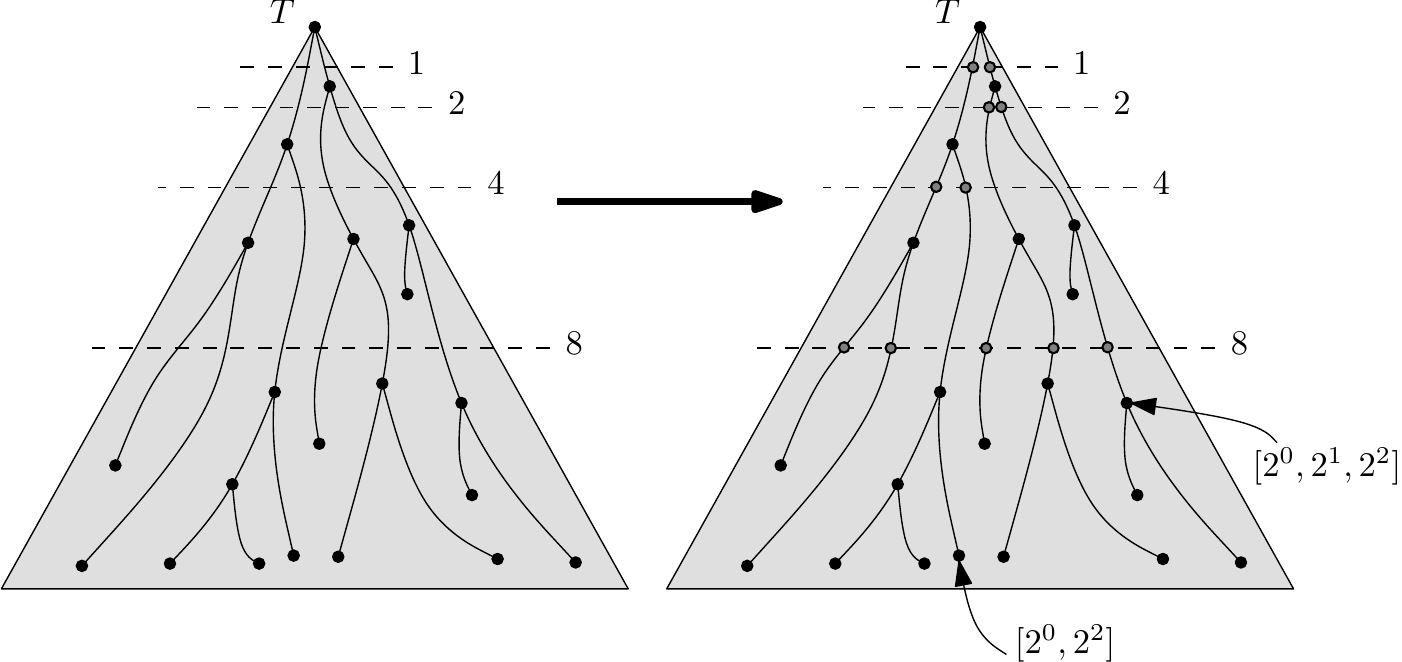}
\caption{Marking vertices at depths $2^k$ in the suffix tree.}
\label{figure:marked}
\end{figure}

While the structure of~\cite{BenderAncestor} does give a constant time answers, we can use a significantly simpler solution building on the fact that the depth of $T'$ is just $\log m$. First we use the standard micro-macro tree decomposition, which gives us a top fragment containing just $\frac{m}{\log m}$ leaves, and a collection of small trees on at most $\log m$ leaves. Note that in this particular case, the total number of vertices cannot be much larger than the number of leaves: the original tree contained vertices with outdegree $1$, then we introduced at most one such vertex at each edge, and then we collapsed some parts of the tree. For each node in the top tree we store all $\log m$ answers explicitly. For each small tree we do as follows: first number its nodes in a depth-first order, then for each node compute a single bitvector containing the numbers of all its ancestors. To find the $k$-th ancestor of a given vertex $v$, we consider two cases.
\begin{enumerate}
\item $v$ belongs to the top tree. Then we have the answer available.
\item $v$ belongs to some small tree. We first check in constant time if its depth in this small tree does not exceed $k$. If it does, we can use the precomputed answers stored for the parent (in the top tree) of the root. Otherwise we take a look at the bitvector corresponding to $v$, and find its $k$-th highest bit set to $1$. Then we retrieve the node corresponding to this depth-first number.
\end{enumerate}
\qed
\end{proof}

Observe that the above lemma can be used to give an optimal solution for a slight relaxation of the {\it substring fingerprints} problem considered in~\cite{FarachHashing}. This problem is defined as follows: given a string $s$, preprocess it to compute any {\it substring hash} $h_s(s[i\twodots j])$ efficiently. We require that:
\begin{enumerate}
\item $h_s(s[i\twodots j])\in[1,\mathcal{O}(|s|^2)]$ so that the values can be operated on efficiently,
\item $h_s(s[i\twodots j])=h_s(s[k\twodots l])$ iff $s[i\twodots j]=s[k\twodots l]$.
\end{enumerate}
If we allow the range of $h_s$ to be slightly larger, say $\mathcal{O}(|s|^3)$, a direct application of the above lemma allows us to evaluate the fingerprints in constant time.

\begin{restatable}{theorem}{theoremfingerprints}
\label{theorem:fingerprints}
Substring fingerprints of size $\mathcal{O}(|s|^3)$ can be computed in constant time after a linear time preprocessing.
\end{restatable}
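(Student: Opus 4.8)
The plan is to identify each substring by a triple consisting of its length together with the \emph{names} of its prefix and suffix blocks of the largest power-of-two length that fits inside it, obtaining those names in constant time from Lemma~\ref{lemma:fast ancestor}. This is the classical Karp--Miller--Rosenberg naming scheme; the whole point is that Lemma~\ref{lemma:fast ancestor} lets us name an individual power-of-two block on demand, rather than naming all of them during the usual $\Theta(|s|\log|s|)$ preprocessing or paying a superconstant weighted-ancestor query per block.

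First I would run the linear-time preprocessing of Lemma~\ref{lemma:fast ancestor} on $s$, and in addition tabulate $\lfloor\log_2\ell\rfloor$ for every $\ell\in[1,|s|]$, which is clearly linear. For a block $s[i\twodots i+2^k-1]$ let $\nr_k(i)$ denote the leftmost position of the suffix array at which it occurs; Lemma~\ref{lemma:fast ancestor} returns this value in constant time. Since all suffixes sharing a common prefix of length $2^k$ form one contiguous range of the suffix array, and two distinct strings of length $2^k$ occupy disjoint ranges, the value $\nr_k(i)$ is a genuine name of the block: for blocks of a fixed length $2^k$ we have $\nr_k(i)=\nr_k(i')$ iff $s[i\twodots i+2^k-1]=s[i'\twodots i'+2^k-1]$.

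Now, given a query $s[i\twodots j]$ of length $\ell=j-i+1$, I put $k=\lfloor\log_2\ell\rfloor$, so $2^k\le \ell<2^{k+1}$, and output \[ h_s(s[i\twodots j]) = (\ell-1)(|s|+2)^2 + (\nr_k(i)-1)(|s|+2) + \nr_k(j-2^k+1). \] All of $\ell$, $\nr_k(i)$, $\nr_k(j-2^k+1)$ are at most $|s|+1$, so this mixed-radix value with base $|s|+2$ lies in $[1,\mathcal{O}(|s|^3)]$ and is computed in constant time. For correctness note that the prefix block $s[i\twodots i+2^k-1]$ and the suffix block $s[j-2^k+1\twodots j]$ together cover $[i,j]$ with no gap, because the no-gap condition $j-2^k+1\le i+2^k$ is exactly $\ell\le 2^{k+1}$, which holds. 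Hence equal length together with equal prefix- and suffix-block names forces the two substrings to coincide, while equal substrings trivially share their length (hence $k$) and therefore their block names; since the encoding is injective on such triples, $h_s$ satisfies both required properties.

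There is no serious obstacle once Lemma~\ref{lemma:fast ancestor} is in hand; the only points needing care are verifying that the two blocks really overlap (so that their two names pin down the entire substring) and observing that the length coordinate cannot be dropped --- a unary string shows that substrings of different lengths inside a single range $[2^k,2^{k+1})$ can share both block names. It is precisely this third coordinate that forces the range $\mathcal{O}(|s|^3)$ instead of the $\mathcal{O}(|s|^2)$ of~\cite{FarachHashing}, which is exactly why we only match the relaxed version of the problem.
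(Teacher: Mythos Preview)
Your proof is correct and follows essentially the same approach as the paper: preprocess with Lemma~\ref{lemma:fast ancestor}, tabulate $\lfloor\log_2\cdot\rfloor$, and fingerprint $s[i\twodots j]$ by the triple consisting of its length together with the names of its two overlapping power-of-two blocks. You are more explicit than the paper about the overlap verification, the packing into a single integer, and why the length coordinate cannot be dropped, but the underlying argument is identical.
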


\begin{proof}
First we apply the preprocessing from Lemma~\ref{lemma:fast ancestor} to $s$. We also store $\left\lfloor\log x\right\rfloor$ for any $1\leq x\leq |s|$. Then given a query $s[i\twodots j]$ we compute $k=\left\lfloor\log(j-i+1) \right\rfloor$ and using constant time level ancestors queries we locate the lowest existing ancestors of both $s[i\twodots i+2^k-1]$ and $s[j-2^k+1\twodots j]$ in the suffix tree. Then $h_s(s[i\twodots j])$ is a triple containing $j-i+1$ and those two ancestors.
\qed
\end{proof}

Now getting back to the original question, the input to \proc{Batched-power-merge} is a sequence of pairs of snippets $w_{1},w_{2},\ldots,w_{\left|\mathcal{G}_\ell\right|}$. By Lemma~\ref{lemma:pairs sort} we can consider them in a sorted order. For each such pair $w=s[i\twodots i+2^{k_1}-1] s[j\twodots j+2^{k_2}-1]$, we first look up the fragment of the suffix array corresponding to its prefix $s[i\twodots i+2^{k_{min}}-1]$ using Lemma~\ref{lemma:fast ancestor}. Then we apply binary search in this fragment, with the exception that if the previous binary search was in this fragment as well, we start from the position it finished, not the beginning of the fragment. Additionally, the binary search is performed from the beginning and the end of the interval at the same time, see \proc{Two-way-binary-search}. If the initial interval is $[a,b]$ and the position we are after is $r$, such modified search uses just $\mathcal{O}(\log\min(r-a+1,b-r+1))$ applications of Lemma~\ref{lemma:equality} instead of $\mathcal{O}(\log(b-a+1))$ time, which is important.

\begin{algorithm}
\caption{\proc{Two-way-binary-search}$(a,b,w)$}
\begin{algorithmic}[1]
\State $x \gets a$, $y \gets b$
\State $k \gets 1$
\While{$2^k \leq b-a$}
  \If{$w <_{lex} s[SA[a + 2^k]]$}
    \State $y \gets a + 2^k$
    \State \textbf{break}
  \EndIf
  \If{$s[SA[b - 2^k]] <_{lex} w$}
    \State $x \gets b - 2^k$
    \State \textbf{break}
  \EndIf
  \State $k \gets k + 1$
\EndWhile
\State $r \gets \text{binary search for } w \text{ in } s[SA[x]\twodots |s|],s[SA[x+1]\twodots |s|],\ldots,s[SA[y]\twodots |s|]$
\State \Return $r$
\end{algorithmic}
\end{algorithm}

While a single binary search might require a non-constant time, we will show that their amortized complexity is constant. To analyze the whole sequence of those searches, we keep a partition of the whole $[1,|s|]$ into a number of disjoint intervals. Doing a single search splits at most one interval into two parts at the position of the first occurrence. If the first occurrence is exactly at an already existing boundary, there is no split, otherwise we say that those two smaller intervals have been created in phase $k_{min}$ (recall that $k_{min}$ linearly depends on $\ell$), and intervals created in phase $k_{min}$ are kept in a list $I_{k_{min}}$. We do not want to split an interval more than once and hence each call to \proc{Batched-powers-merge} starts with finding for each $w_{i}$ its corresponding interval in $I_{k_{min}}$. After processing all concatenations, we add the new intervals to $I_{k_{min}}$ and prune it to contain the intervals which are minimal under inclusion. Scanning and pruning $I_{k_{min}}$ takes linear time in its size, and we show that this size is small.
\begin{algorithm}
\caption{\proc{Batched-powers-merge}$(w_1,w_2,\ldots,w_{\left|\mathcal{G}_\ell\right|})$}
\begin{algorithmic}[1]
\State sort all $w_i$ \label{line:sort input} \Comment{{\bf Lemma~\ref{lemma:pairs sort}}}
\State scan $I_{k_{min}}$ to find the intervals containing $w_i$ \label{line:scan input}
\State $L \gets \emptyset$
\State $r_0 \gets 1$
\For{$i \gets 1$ to $\left|\mathcal{G}_\ell\right|$}
  \State $[a,b] \gets \text{the interval corresponding to } w_i[1\twodots 2^{k_{min}}] \text{ in } SA$ \Comment{{\bf Lemma~\ref{lemma:fast ancestor}}}
  \State choose $[c,d]\in I_{k_{min}}$ containing the first occurrence of $w_i$ in SA \label{line:choose first}
  \If{$[c,d]$ is defined}
    \State $a \gets \max(a,c)$
    \State $b \gets \min(b,d)$
  \EndIf
  \State $a \gets \max(r_{i-1}, a)$
  \State $r_i \gets \proc{Two-way-binary-search}(a,b,w_i)$ \label{line:two way search}
  \State add $[a,r_i]$ and $[r_i,b]$ to $L$
\EndFor
\State sort $L$ and merge it with $I_{k_{min}}$, removing non-minimal intervals \label{line:sorting list}
\State \Return all answers $r_i$
\end{algorithmic}
\end{algorithm}
\begin{lemma}\label{lemma:batched merges}
All $\mathcal{O}(\log m)$ calls to \proc{Batched-powers-merge} run in total time $\mathcal{O}(m+\sum_\ell\left|\mathcal{G}_\ell\right|)$.
\end{lemma}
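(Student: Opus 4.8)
The plan is to split the total running time into the per-call bookkeeping and the cost of the calls to \proc{Two-way-binary-search}, and to control the latter by an amortized argument based on a single global partition of $[1,|s|]$. For the bookkeeping, note first that $\mathcal{G}_\ell$ is empty once $(\tfrac{4}{3})^\ell>m$, so there are only $\mathcal{O}(\log m)$ nonempty calls. Line~\ref{line:sort input} invokes Lemma~\ref{lemma:pairs sort} at cost $\mathcal{O}(|\mathcal{G}_\ell|+m^\epsilon)$ per call, hence $\mathcal{O}(\sum_\ell|\mathcal{G}_\ell|+m^\epsilon\log m)=\mathcal{O}(m+\sum_\ell|\mathcal{G}_\ell|)$ in total. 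Looking up the suffix-array range $[a,b]$ of each $w_i[1\twodots 2^{k_{min}}]$ on line~\ref{line:choose first}'s predecessor costs $\mathcal{O}(1)$ by Lemma~\ref{lemma:fast ancestor}. Scanning $I_{k_{min}}$ on line~\ref{line:scan input}, choosing $[c,d]$ on line~\ref{line:choose first}, and sorting and merging $L$ into $I_{k_{min}}$ on line~\ref{line:sorting list} all run in time linear in the number of queries plus the number of intervals touched; since each search creates at most two new intervals, the number of intervals ever present is $\mathcal{O}(\sum_\ell|\mathcal{G}_\ell|)$, so all bookkeeping sums to $\mathcal{O}(m+\sum_\ell|\mathcal{G}_\ell|)$.

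It remains to bound the searches. I would maintain one global partition of $[1,|s|]$ that is only ever refined, tagging each newly created piece with the phase $k_{min}$ that produced it (this is exactly what the lists $I_{k_{min}}$ record). For a query $w_i$ the intersection of its suffix-array range with the containing partition interval $[c,d]$ and with the finger $r_{i-1}$ gives the window $[a,b]\subseteq[c,d]$, so by the guarantee quoted for \proc{Two-way-binary-search} on line~\ref{line:two way search} the search costs $\mathcal{O}(\log\min(r_i-a+1,b-r_i+1))$, which is $\mathcal{O}(\log h)$ for $h$ the size of the smaller half of $[c,d]$ at $r_i$. If $r_i$ lands on an endpoint of $[c,d]$ there is no split and the search costs $\mathcal{O}(1)$, contributing $\mathcal{O}(\sum_\ell|\mathcal{G}_\ell|)$ over all queries. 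The sorted processing together with $a\gets\max(r_{i-1},a)$ guarantees that several queries of one call landing in the same $[c,d]$ sweep rightward and are charged against the sub-pieces they actually create, even though $I_{k_{min}}$ is updated only after the call on line~\ref{line:sorting list}.

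Finally I would charge the splitting searches. Because the partition is refined globally, all splits form a single laminar family, namely a binary tree with at most $|s|$ leaves over $[1,|s|]$, and each split costs $\mathcal{O}(\log h)$ for the size $h$ of its smaller half. I would distribute this cost uniformly over the smaller half, i.e. $\mathcal{O}(\tfrac{\log h}{h})$ to each of its $h$ positions. For any fixed position the smaller halves containing it are nested, and each is the smaller part of a split of an interval contained in the previous smaller half, hence at most half its size; thus its total charge is $\mathcal{O}\!\left(\sum_{j\ge 0}\frac{\log(|s|/2^{j})}{|s|/2^{j}}\right)=\mathcal{O}(1)$. Summing over all positions bounds the total splitting cost by $\mathcal{O}(|s|)=\mathcal{O}(m)$, and adding the three contributions gives the claimed $\mathcal{O}(m+\sum_\ell|\mathcal{G}_\ell|)$.

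The step I expect to be the main obstacle is precisely this amortization. One must insist on a single global partition rather than an independent partition per phase: there are $\Theta(\log m)$ phases, each capable of $\Theta(m)$ splitting cost in isolation, so a per-phase accounting would only yield $\mathcal{O}(m\log m)$. Reconciling the single global view with the per-phase lists $I_{k_{min}}$ and with the deferred, batched refinement of the partition on line~\ref{line:sorting list} — in particular arguing that within-call repeated queries remain consistent with the global refinement and stay $\mathcal{O}(1)$ when they do not split — is the delicate point of the argument.
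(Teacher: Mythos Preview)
Your proposal is correct and follows the paper's approach: separate the bookkeeping from the \proc{Two-way-binary-search} cost, then amortize the latter via a single global laminar refinement of $[1,|s|]$ where each split is charged to its smaller side. The paper phrases the final summation differently---it defines $\rank(v)=\lfloor\log|v|\rfloor$, calls a node \emph{charged} if its weight is at most its sibling's, observes that charged nodes of the same rank correspond to disjoint intervals and hence number at most $m/2^{k}$, and sums $\sum_{k}k\cdot m/2^{k}\le 2m$---whereas you spread $\frac{\log h}{h}$ over the positions of the smaller half and argue each position pays $\mathcal{O}(1)$. These are two equivalent presentations of the same small-to-large bound.

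One small technical wrinkle in your write-up: you bound the per-position charge by $\sum_{j\ge 0}\frac{\log(|s|/2^{j})}{|s|/2^{j}}$ by substituting the upper bound $h_{j}\le |s|/2^{j}$, but $h\mapsto\frac{\log h}{h}$ is not monotone, so this substitution is not an inequality as stated. The clean fix is to note that the integers $\lfloor\log_{2}h_{j}\rfloor$ are pairwise distinct (since $h_{j+1}\le h_{j}/2$), so $\sum_{j}\frac{\log h_{j}}{h_{j}}\le\sum_{b\ge 0}\frac{b+1}{2^{b}}=\mathcal{O}(1)$; this is exactly the rank-based count the paper uses, seen from the other side.
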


\begin{proof}
First note that the sorting in line~\ref{line:sorting list} can be performed in time $\mathcal{O}(m^\epsilon+\left|I_{k_{min}} \right|+\left|\mathcal{G}_\ell\right|)$ using radix sort. Line~\ref{line:sort input} takes time $\mathcal{O}(m^\epsilon+\left|\mathcal{G}_\ell\right|)$ due to Lemma~\ref{lemma:pairs sort}, and line~\ref{line:scan input} requires $\mathcal{O}(\left|I_{k_{min}}\right|+\left|\mathcal{G}_\ell\right|)$. All executions of line~\ref{line:choose first} take time $\mathcal{O}(\left|I_{k_{min}}\right|)$ because the words $w_i$ are already sorted. For the time being assume that the binary search in line~\ref{line:two way search} is for free. Then the total complexity becomes $\mathcal{O}(\sum_i m^\epsilon + \left|I^{(i)}_{k_{min}} \right|+\left|\mathcal{G}_\ell\right|)$ where $\left|I^{(i)}_{k_{min}}\right|$ is the size of $I_{k_{min}}$ just before the $i$-th call to \proc{Batched-powers-merge}. There is a constant number of those calls for each value of $1\leq\ell\leq m$, and each $k_{min}$ corresponds to at most constant number of different continuous values of $\ell$, thus the sum is in fact $\mathcal{O}(m+\sum_\ell\left|\mathcal{G}_\ell\right|)$.

To finish the proof we have to bound the time taken by all binary searches. For that to happen we will view the intervals as vertices of a tree. Whenever performing a binary search splits an interval into two, we add a left and right child to the corresponding leaf $v$, see Figure~\ref{figure:search}. The {\it rank} $\rank(v)$ of a vertex $v$ is the rounded logarithm of its {\it weight}, which is the length of the corresponding interval. Then the cost of line~\ref{line:two way search} is simply $\mathcal{O}(1+\min(\rank(\text{left}(v)),\rank(\text{right}(v))))$ where $\text{left}(v)$ and $\text{right}(v)$ are the left and right child of $v$, respectively. Hence we should bound the sum $\sum_{v} \min(\rank(\text{left}(v)),\rank(\text{right}(v)))$, where $v$ is a non-leaf. We say that a vertex is {\it charged} when its weight does not exceed the weight of its brother. Now we claim that there are at most $\frac{m}{2^k}$ charged vertices of rank $k$: assume that there are $u$ and $v$ such that $u$ is an ancestor of $v$, both are charged and of rank $k$, then weight of $v$ plus weight of its brother is at least twice as large as the weight of $v$ alone, thus the rank of their parent is larger than the rank of $v$, contradiction. So all charged vertices of the same rank correspond to disjoint intervals, and there cannot be more than $\frac{m}{2^k}$ disjoint intervals of length at least $2^k$ on a segment of length $m$. Bounding the sum gives the claim:
\vspace{-0.3cm}
$$\hspace{0.5cm}\sum_{v} \min(\rank(\text{left}(v)),\rank(\text{right}(v))) \leq \sum_{k\geq 0}^{\log m} k\frac{m}{2^k} \leq m\sum_{k\geq 0}^{\infty}\frac{k}{2^k}=2m\hspace{1cm}\qed$$
\vspace{-0.7cm}
\begin{figure}
\centering
\includegraphics[width=\linewidth]{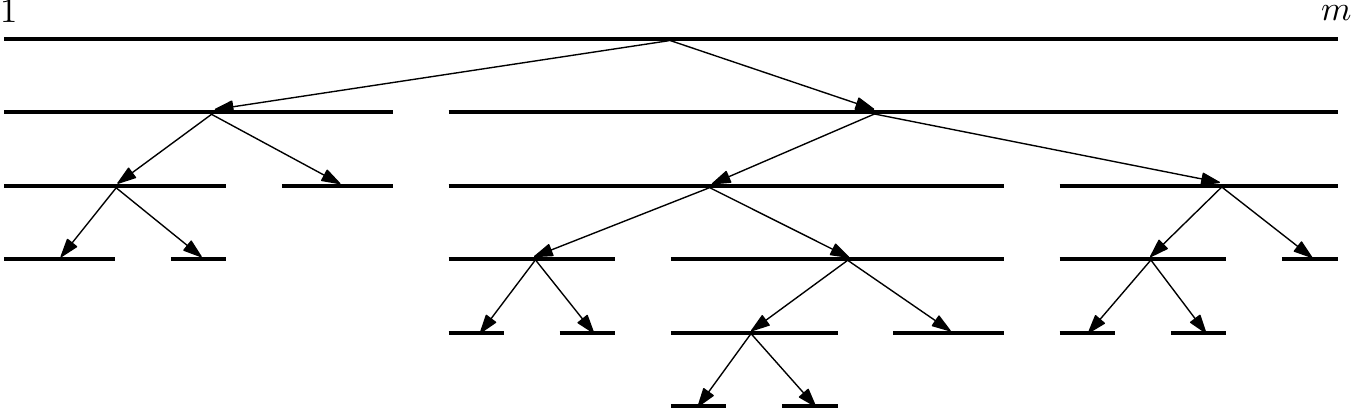}
\caption{Interpreting the intervals as a tree.}
\vspace{-0.6cm}
\label{figure:search}
\end{figure}
\end{proof}

Hence for all productions $X\rightarrow YZ$ such that we have the cover of both $Y$ and $Z$, we either computed the cover of $X$ or decided that there is none. If for a production we cannot find the cover of $X$, we compute $\prefix(X), \suffix(X)$ given the covers of $Y$ and $Z$
using a few applications of  Lemma~\ref{lemma:concatenation prefix} with carefully chosen arguments.

\begin{restatable}{lemma}{lemmaprefixfromcovers}
\label{lemma:prefix from covers}
Given the covers of $Y$ and $Z$, we can compute $\prefix(X)$ and $\suffix(X)$ in constant time as long as $\frac{|Y|}{|Z|}$ and $\frac{|Z|}{|Y|}$ are bounded from above by a constant. To compute $\prefix(X)$ we can use $\prefix(Z)$ instead of the cover of $Z$, and $\suffix(X)$ can be replaced with $\suffix(Y)$ instead of the cover of $Y$.
\end{restatable}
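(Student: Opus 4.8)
The plan is to reduce the computation of $\prefix(X)$ and $\suffix(X)$ to combining a single snippet that is a prefix (suffix) of $s$ with the nonterminal on the other side, and then to resolve that combination using the cover together with the periodicity machinery already developed for Lemma~\ref{lemma:concatenation prefix}. I will describe $\suffix(X)$ in detail; $\prefix(X)$ is symmetric, obtained by reversing $s$ and interchanging the roles of $Y$ and $Z$.

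First I would establish the structural reduction: $\suffix(X)$ equals the longest prefix of $s$ that is a suffix of $\suffix(Y)\,Z$. The inequality ``$\le$'' is immediate, since $\suffix(Y)$ is a suffix of $Y$, so every suffix of $\suffix(Y)\,Z$ is a suffix of $X=YZ$. For ``$\ge$'', take the suffix of $X$ realizing $\suffix(X)$, of some length $\ell$. If $\ell\le|Z|$ it is a suffix of $Z$, hence of $\suffix(Y)\,Z$; if $\ell>|Z|$ its portion lying in $Y$ is a suffix of $Y$ of length $\ell-|Z|$ that is also a prefix of $s$, so by maximality it has length at most $|\suffix(Y)|$ and coincides with the matching suffix of $\suffix(Y)$, whence the whole suffix is in fact a suffix of $\suffix(Y)\,Z$. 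This is the key point: only $\suffix(Y)$, never all of $Y$, can participate, which is exactly why $\suffix(X)$ can be computed from $\suffix(Y)$ and the cover of $Z$ (and dually $\prefix(X)$ from $\prefix(Z)$ and the cover of $Y$). By balancedness $|\suffix(Y)|\le|Y|=\mathcal{O}(|Z|)$.

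Next I would split $Z$ through its cover as $Z=Z_{\text{pre}}\cdot S_Z$, where $S_Z$ is the suffix snippet of order $k_Z$, so $|S_Z|=2^{k_Z}>|Z|/2$, and $Z_{\text{pre}}=Z[1\twodots|Z|-2^{k_Z}]$ is a prefix of the cover's prefix snippet, hence itself a snippet of length $<|Z|/2$. The sought longest prefix of $s$ that is a suffix of $\suffix(Y)\,Z$ either (i) has length at most $2^{k_Z}$, in which case it is a suffix of $S_Z$ that is a prefix of $s$ and is returned directly by Lemma~\ref{lemma:longest suffix} applied to $S_Z$; or (ii) has length $\ell>2^{k_Z}$, in which case its last $2^{k_Z}$ characters are exactly $S_Z$, so $S_Z$ occurs in $s$ ending at position $\ell$ and the preceding $s[1\twodots\ell-2^{k_Z}]$ must match a suffix of $\suffix(Y)\,Z_{\text{pre}}$.

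Case (ii) is the main obstacle, and I would handle it exactly as in the proof of Lemma~\ref{lemma:concatenation prefix}. The relevant occurrences of $S_Z$ are those starting within the first $|\suffix(Y)|+|Z_{\text{pre}}|$ positions of $s$; since this window has width $\mathcal{O}(|Z|)=\mathcal{O}(|S_Z|)$ by balancedness, they split into a constant number of arithmetic progressions governed by the period of $S_Z$ (Lemma~\ref{lemma:few borders}), locatable in constant time from the suffix-tree node of $S_Z$. For each progression I would use the period together with longest-common-prefix queries (Lemma~\ref{lemma:equality}) to select, in constant time, the largest occurrence whose preceding context $s[1\twodots\ell-2^{k_Z}]$ agrees with the appropriate suffix of $\suffix(Y)\,Z_{\text{pre}}$; the comparison of a prefix of $s$ against a suffix of $Y$ is resolved through the genuine prefix snippet $\suffix(Y)$ — using that any suffix of $Y$ which is a prefix of $s$ lies inside $\suffix(Y)$, so a requested length exceeding $|\suffix(Y)|$ is rejected outright — while the comparison against $Z_{\text{pre}}$ is a single snippet-versus-snippet query. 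Taking the better of (i) and (ii) yields $\suffix(X)$ in constant time, and the symmetric argument yields $\prefix(X)$. The only delicate accounting is to keep, via balancedness, every occurrence window within a constant factor of the length of the large cover part we anchor on, so that merely $\mathcal{O}(1)$ arithmetic progressions ever arise and no logarithmic enumeration of border groups is triggered.
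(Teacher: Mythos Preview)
Your overall plan is sound and the structural reduction $\suffix(X)=\text{longest prefix of }s\text{ that is a suffix of }\suffix(Y)\cdot Z$ is correct and well argued; it is exactly what justifies the ``$\suffix(Y)$ instead of the cover of $Y$'' clause. But your treatment of case~(ii) has a genuine gap. After anchoring on $S_Z$, the relevant starting positions for $S_Z$ lie in a window of width $|\suffix(Y)|+|Z_{\text{pre}}|\le |Y|+|Z|/2$, which by $0.25$-balancedness can be as large as $7|S_Z|$, not at most $|S_Z|/2$. In this regime the occurrences of $S_Z$ need not form a single arithmetic progression; they may split into several, and you assert that these $\mathcal{O}(1)$ progressions are ``locatable in constant time from the suffix-tree node of $S_Z$''. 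The preprocessing available in the paper (and in the internals of Lemma~\ref{lemma:concatenation prefix}) gives you only the first and the second global occurrence of $S_Z$ in $s$; that suffices when the window has width at most $|S_Z|/2$ (primitive~2 in Lemma~\ref{lemma:concatenation prefix}), but not for a window seven times wider. Finding the next occurrence after a given position, or the $k$-th leftmost occurrence for $k>2$, is not a constant-time query with the stated data structures, so your enumeration step is not justified as written. The gap is patchable (for instance, precompute the $\mathcal{O}(1)$ leftmost leaves per suffix-tree node, the constant depending only on~$\alpha$), but you have not done so.

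The paper sidesteps this issue by a different, more modular route: it never lets the window outgrow the anchor. Instead of combining $\suffix(Y)$ with $Z$ in one shot, it walks through the four cover pieces $a,b,c,d$ one at a time, each time invoking Lemma~\ref{lemma:concatenation prefix} as a black box with $s_2$ equal to the current cover piece (a power-of-two snippet, so its suffix-tree node comes from Lemma~\ref{lemma:fast ancestor}) and $s_1$ equal to the prefix/suffix snippet accumulated so far. At every call the ratio $|s_1|/|s_2|$ is bounded by a constant depending only on~$\alpha$, so the internal halving loop of Lemma~\ref{lemma:concatenation prefix} runs $\mathcal{O}(1)$ times and reaches the regime $|s_2|\ge 2|s_1|$ where a single arithmetic progression suffices. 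This chaining of $\mathcal{O}(1)$ black-box calls is both simpler and avoids the need for any additional occurrence-enumeration machinery.
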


\begin{proof}
It is enough to consider $\prefix(X)$. The idea is to use a few application of Lemma~\ref{lemma:concatenation prefix} with carefully chosen arguments, see Figure~\ref{figure:prefix}. More specifically, let $a,b$ and $c,d$ be the covers of $Y$ and $Z$, respectively. First we locate the vertex corresponding to $d$ in the suffix tree, due to Lemma~\ref{lemma:fast ancestor} and $|d|=2^k$ it takes constant time, then:
\begin{enumerate}
\item[(1)] apply Lemma~\ref{lemma:longest suffix} to compute $\prefix(d)$ if we have the cover of $Z$, otherwise take the known $\prefix(Z)$ and go to (3),
\item[(2)] apply Lemma~\ref{lemma:concatenation prefix} to $c$ and $\prefix(d)$ without the first $|c|+|d|-|Z|$ letters to get $\prefix_1$,
\item[(3)] apply Lemma~\ref{lemma:concatenation prefix} to $b$ and $\prefix_1$ to get $\prefix_2$,
\item[(4)] apply Lemma~\ref{lemma:concatenation prefix} to $a$ and $\prefix_2$ without the first $|a|+|b|-|Y|$ letters to get the desired answer $\prefix_3$.
\end{enumerate}
Note that whenever we apply the lemma to two words $u$ and $v$, $|v|$ is a power of $2$ and so we can use Lemma~\ref{lemma:fast ancestor} to locate its corresponding node in constant time. Also, it holds that $|u|\geq\frac{\min(|Y|,|Z|)}{2}$ and $|v|\leq |Y|+|Z|$ and so the running time is bounded by:
$$
 \max\left(1,\log\frac{|v|}{|u|}\right) \leq \max\left(1, \log\left(\frac{|Y|+|Z|}{\min(|Y|,|Z|)}\right)\right) = \log\left(1+\frac{\max(|Y|,|Z|)}{\min(|Y|,|Z|)}\right)
$$
which is $\mathcal{O}(1)$.
\qed

\begin{figure}
\centering
\includegraphics[width=\linewidth]{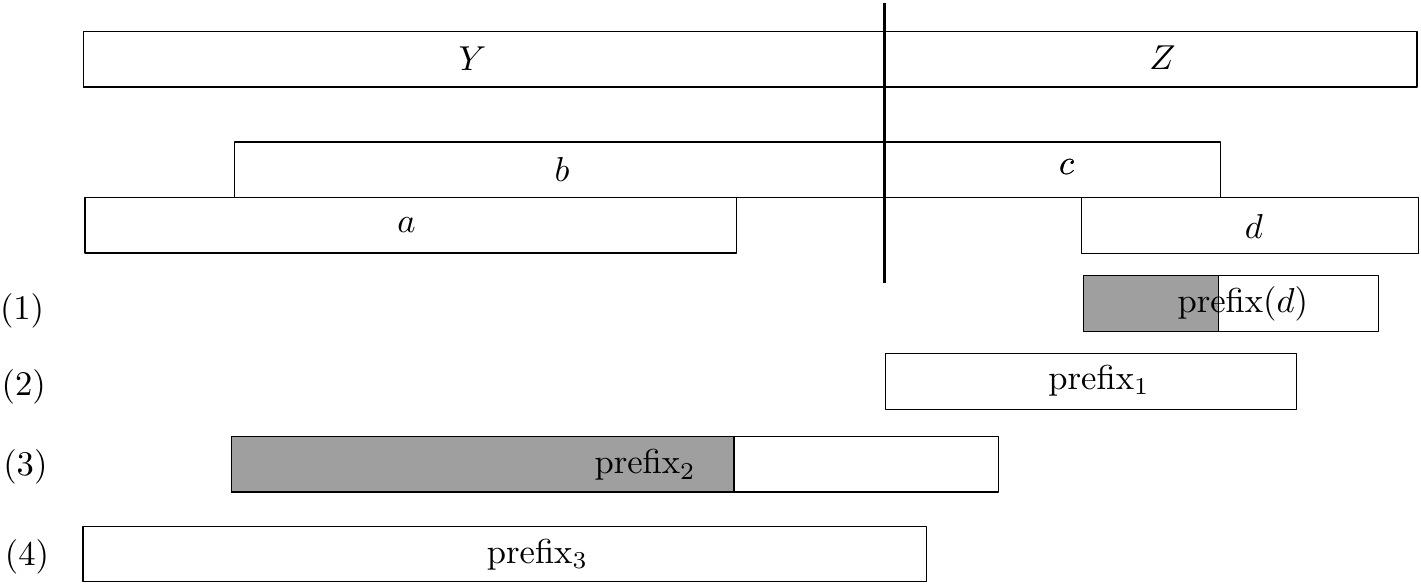}
\caption{Computing $\prefix(X)$ given the covers of $Y$ and $Z$.}
\label{figure:prefix}
\end{figure}

\end{proof}

\begin{theorem}\label{theorem:balanced occurrence}
Given a 0.25-balanced SLP of size $\mathcal{O}(n\log\frac{N}{n})$ and a pattern $s[1\twodots m]$, we can detect an occurrence of $s$ in the represented text in time $\mathcal{O}(n\log\frac{N}{n}+m)$.
\end{theorem}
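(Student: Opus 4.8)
The plan is to assemble the machinery of the preceding lemmas into a single bottom-up sweep over the balanced grammar and then to check that every component contributes at most $\mathcal{O}(n\log\frac{N}{n}+m)$. First I would run the $\mathcal{O}(m)$ preprocessing of $s$ needed by Lemma~\ref{lemma:equality}, Lemma~\ref{lemma:longest suffix}, Lemma~\ref{lemma:borders preprocessing} and Lemma~\ref{lemma:fast ancestor}, and partition the $\mathcal{O}(n\log\frac{N}{n})$ nonterminals into the groups $\mathcal{G}_\ell$ with $(\frac{4}{3})^\ell<|X|\leq(\frac{4}{3})^{\ell+1}$, which is linear in the grammar size. The invariant maintained by the sweep is that after a group is processed, each of its nonterminals $X$ owns either a cover or the pair $\prefix(X),\suffix(X)$, and in fact I would arrange to have $\prefix(X),\suffix(X)$ available for \emph{every} nonterminal, since those are the only quantities the final detection consumes.

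Then I would process the groups in increasing order of $\ell$, computing the covers of $\mathcal{G}_1$ naively. For $\mathcal{G}_\ell$, $0.25$-balance gives $|Y|,|Z|\leq\frac{3}{4}|X|$ for every production $X\to YZ$, so both children sit in earlier groups $\mathcal{G}_{\ell'}$ with $\ell-5\leq\ell'<\ell$ and are already processed. If a child lacks a cover, then $X$ cannot occur in $s$ either, and I would remove $X$ from the group after computing $\prefix(X),\suffix(X)$: when $Y$ has no cover a short argument shows $\prefix(X)=\prefix(Y)$ (a longer prefix would force $Y$ to occur in $s$), and symmetrically $\suffix(X)=\suffix(Z)$ when $Z$ has no cover, while the remaining side is obtained from the variant of Lemma~\ref{lemma:prefix from covers} that substitutes $\prefix(Z)$ or $\suffix(Y)$ for the missing child cover. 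For the survivors (both children covered) I would invoke Lemma~\ref{lemma:covers reduction} to fold the cover computation for the entire group into a constant number of \proc{Batched-powers-merge} calls, each of which either returns the cover of $X$ or certifies that none exists.

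Once covers are known I would compute $\prefix(X),\suffix(X)$ for every $X$ in constant time by feeding the covers of $Y$ and $Z$ to Lemma~\ref{lemma:prefix from covers}, whose $\mathcal{O}(1)$ bound holds precisely because $\frac{|Y|}{|Z|}$ and $\frac{|Z|}{|Y|}$ are bounded under $0.25$-balance. The detection is then immediate: by Lemma~\ref{lemma:first occurrence}, if $s$ occurs in the text at all, some production $X\to YZ$ has $s$ occurring in $\suffix(Y)\prefix(Z)$; since $\suffix(Y)$ is a prefix snippet and $\prefix(Z)$ a suffix snippet, I would test each of the $\mathcal{O}(n\log\frac{N}{n})$ productions with Lemma~\ref{lemma:concatenation occurrence} in constant time. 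For the running time I would charge the pieces separately: grouping, the per-group reduction of Lemma~\ref{lemma:covers reduction}, the constant-time applications of Lemma~\ref{lemma:prefix from covers}, and the final concatenation tests each cost $\mathcal{O}(\sum_\ell|\mathcal{G}_\ell|)=\mathcal{O}(n\log\frac{N}{n})$, while Lemma~\ref{lemma:batched merges} bounds the aggregate cost of all \proc{Batched-powers-merge} calls by $\mathcal{O}(m+\sum_\ell|\mathcal{G}_\ell|)$, giving $\mathcal{O}(n\log\frac{N}{n}+m)$ overall.

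The main obstacle is not any single inequality but the accounting discipline: a naive per-query binary search inside \proc{Batched-powers-merge} would reintroduce exactly the $\log m$ factor this theorem is meant to shed, so the argument stands or falls on the amortized guarantee of Lemma~\ref{lemma:batched merges} and on making sure no other sub-step secretly exceeds linear-in-grammar or additive-$m$ cost. A secondary point requiring care is the case analysis that keeps $\prefix(X),\suffix(X)$ always derivable regardless of whether $X$ or its children own covers, and the verification that the group order with the $\ell-5\leq\ell'<\ell$ window never asks for a child quantity that has not yet been produced.
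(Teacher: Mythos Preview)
Your proposal is correct and follows essentially the same route as the paper: compute covers group by group via Lemma~\ref{lemma:covers reduction} and Lemma~\ref{lemma:batched merges}, fall back to Lemma~\ref{lemma:prefix from covers} (or its variant) for $\prefix(X),\suffix(X)$ when a cover is unavailable, and finish with Lemma~\ref{lemma:first occurrence} plus Lemma~\ref{lemma:concatenation occurrence} on every production. Your explicit case analysis for a missing child cover (in particular the observation $\prefix(X)=\prefix(Y)$ when $Y$ has no cover, and symmetrically for $Z$) is a correct elaboration that the paper leaves implicit.
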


\begin{proof}
By Lemma~\ref{lemma:covers reduction} and Lemma~\ref{lemma:batched merges} we compute the covers of all nonterminals which represent subwords of $s$ in time $\mathcal{O}(n\log\frac{N}{n}+m)$. For the remaining nonterminals $X$ we use Lemma~\ref{lemma:prefix from covers} to compute $\prefix(X)$ and $\suffix(X)$ in total linear time considering the nonterminals in bottom-up order. Then due to Lemma~\ref{lemma:first occurrence} if there is an occurrence of $s$, there is an occurrence in $\prefix(Y)\suffix(Z)$ for some production $X\rightarrow YZ$. We consider every nonterminal $X$, either lookup the already computed $\prefix(Y)$ and $\suffix(Z)$ or compute them using the known covers and Lemma~\ref{lemma:prefix from covers}, and use Lemma~\ref{lemma:concatenation occurrence} to detect a possible occurrence.
\qed
\end{proof}


\begin{theorem}
Given a (potentially self-referential) Lempel-Ziv parse of size $n$ describing a text $t[1\twodots N]$ and a pattern $s[1\twodots m]$, we can detect an occurrence of $s$ inside $t$ deterministically in time $\mathcal{O}(n\log\frac{N}{n}+m)$.
\end{theorem}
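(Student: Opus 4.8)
The plan is to obtain the result as a direct composition of the two principal building blocks assembled above, so almost all of the work has already been done. First I would invoke Lemma~\ref{lemma:balanced construction} to convert the (possibly self-referential) LZ parse of size $n$ into an $\alpha$-balanced SLP describing the same length-$N$ string, of size $\mathcal{O}(n\log\frac{N}{n})$, in time proportional to that output size. The only point worth checking here is consistency of constants: the processing phase fixed $\alpha=0.25$, and since $0.25 < 1-\frac{\sqrt{2}}{2}$ this choice lies in the admissible range of Lemma~\ref{lemma:balanced construction}, so a $0.25$-balanced grammar is indeed produced in $\mathcal{O}(n\log\frac{N}{n})$ time.

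Second, I would run the linear-time preprocessing of the pattern demanded by the later lemmas. Building the suffix trees for $s$ and $s^r$, the suffix array, the LCA structure, the Knuth--Morris--Pratt border tables, and the power-of-two ancestor structure of Lemma~\ref{lemma:fast ancestor} each take $\mathcal{O}(m)$ time, so the whole pattern preprocessing costs $\mathcal{O}(m)$.

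Third, with a $0.25$-balanced SLP of size $\mathcal{O}(n\log\frac{N}{n})$ in hand, I would feed it straight into Theorem~\ref{theorem:balanced occurrence}, which detects an occurrence of $s$ in the represented text in time $\mathcal{O}(n\log\frac{N}{n}+m)$. Summing the three phases yields $\mathcal{O}(n\log\frac{N}{n})+\mathcal{O}(m)+\mathcal{O}(n\log\frac{N}{n}+m)=\mathcal{O}(n\log\frac{N}{n}+m)$, as claimed.

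Since every hard step is delegated to an earlier statement, I expect no genuine obstacle to remain; the only things to get right are the bookkeeping of constants (the admissible range of $\alpha$) and confirming that the correctness guarantee of Theorem~\ref{theorem:balanced occurrence}---which rests on Lemma~\ref{lemma:first occurrence}, namely that an occurrence in the text forces an occurrence in some $\suffix(Y)\prefix(Z)$---transfers verbatim. It does, because the grammar produced by Lemma~\ref{lemma:balanced construction} represents exactly the same string $t[1\twodots N]$ as the input parse, so detecting an occurrence in the SLP-generated text is identical to detecting one in $t$.
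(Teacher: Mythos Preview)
Your proposal is correct and matches the paper's approach exactly: the final theorem is stated without an explicit proof because it follows immediately by composing Lemma~\ref{lemma:balanced construction} with Theorem~\ref{theorem:balanced occurrence}, precisely as you outline. Your check that $0.25 \le 1-\frac{\sqrt{2}}{2}$ so that the chosen $\alpha$ is admissible is the only bookkeeping required.
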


\section{Conclusions}

Recall that in order to guarantee a $\mathcal{O}(n\log\frac{N}{n}+m)$ running time, it was necessary to use integer division in the proof of Lemma~\ref{lemma:balanced construction}. This was the only such place, though. If we assume that integer division is not allowed, and the only
operations on the integers $start_i,len_i$ appearing in the input triples are addition, subtraction, multiplication and comparing with $0$
(which are the only operations used by the $\mathcal{O}(n\log N+m)$ version of our algorithm), we can prove a matching lower bound by looking at
the corresponding algebraic computation trees. More precisely, using standard tools~\cite{Lubiw} one can show that the depth of such tree which recognizes the set of integers $t,x_1,x_2,\ldots,x_n$ such that for all $i$ it holds that $x_i=(2\alpha_i+1)t+\beta_i$ with $0\leq\beta_i<t$ 
and $0\leq \alpha_i< N$ is $\Omega(n\log N)$. On the other hand, one can construct a self-referential LZ of constant size deriving
$(1^t0^t)^N$. Hence one can also construct a LZ of size $\mathcal{O}(n)$ deriving $(1^t0^t)^N b_1 1 \ldots b_n 1$ where
$b_i= \left\lfloor\frac{x_i}{t}\right\rfloor\bmod 2$. This string does not contain $11$ as a substring iff all $x_i$ are of the form
$x_i=(2\alpha_i+1)t+\beta_i$ and the lower bound follows.



\bibliographystyle{abbrv}
\bibliography{biblio}

\end{document}